\pgfplotsset{compat=1.8}
    \newcolumntype{P}[1]{>{\centering\arraybackslash}p{#1}}
    \newcolumntype{M}[1]{>{\centering\arraybackslash}m{#1}}
\newtheorem{thm}{Theorem}[section]
\newtheorem{lem}[thm]{Lemma}
\newtheorem{prop}[thm]{Proposition}
\newtheorem*{thm*}{Theorem}
\theoremstyle{definition}
\newtheorem{defn}{Definition}[section]
\newtheorem{rem}{Remark}[section]
\newcommand{\dlim}{\displaystyle \lim\limits}
\def\T1{T^1_{\{x_n\}}}
\def\Sp{\mathcal S}
\def\C{\mathcal C}
\def\Re{\mathcal R}
\def\Td1{T^{D,1}_{\{x_n\}}}
\def\Ts1{T^{S,1}_{\{x_n\}}}
\def\K{\mathcal{K}}
\def\Tid1{T^{D,1}_{\{x_n+\zeta_i\}}}
\def\Tis1{T^{S,1}_{\{x_n+\zeta_i\}}}
\def\Tjs1{T^{S,1}_{\{x_n+\zeta_j\}}}
\def\Tjd1{T^{D,1}_{\{x_n+\zeta_j\}}}
\newcommand*\colvec[1]{
        \global\colveccount#1
        \begin{pmatrix}
        \colvecnext
}
\def\colvecnext#1{
        #1
        \global\advance\colveccount-1
        \ifnum\colveccount>0
                \\
                \expandafter\colvecnext
        \else
                \end{pmatrix}
        \fi
}
\newcommand{\labitem}[2]{%
\def\@itemlabel{\textbf{#1}}
\item
\def\@currentlabel{#1}\label{#2}}
\title{
 Noise-robust chemical reaction networks training artificial neural networks}
\author{
Sung-Hwa Kang\thanks{...},
\and 
Jinsu Kim\thanks{Department of Mathematics, Pohang University of Science Technology, Pohang 37673, Republic of Korea. jinsukim@postech.ac.kr, grant support from the National Research Foundation of Korea (NRF) grant funded by the Korea government (MSIT)(No. 2022R1C1C1008491). }
  }
\begin{document}

\maketitle

\begin{abstract}
Artificial neural networks (NNs) can be implemented using chemical reaction networks (CRNs), where the concentrations of species act as inputs and outputs. In such biochemical computing, noise-robust computing is crucial due to the intrinsic and extrinsic noise present in chemical reactions. Previously suggested CRNs for feed-forward networks often utilized the rectified linear unit (ReLU) or discrete activation functions. However, one concern in this case is the discontinuities of the derivatives of those non-smooth functions, which can cause significant noise disruption during backpropagation.
In this study, we propose a CRN that performs both feed-forward and training processes using smooth activation functions to avoid  discontinuities in the backpropagation. All reactions occur in a single pot, and the reactions for training are bimolecular. Our case studies on XOR, Iris, MNIST datasets, and a non-linear regression model demonstrate that computation via the CRN (i) maintains accuracy despite noise in the reaction rates and the concentration of species and (ii) is insensitive to the choice of the running time and the magnitude of the noise in comparison to NNs with a non-smooth activation function. 
This work presents a noise-robust CRN for full NN computation, including backpropagation, paving the way for more stable and efficient biochemical computing systems.
\end{abstract}

\tableofcontents

% For example, let a feed-forward neural network be represented by a function $f$ such that $y=f(a)$ for each pair $(a,y) \in \mathbb R^{d}\times \mathbb R^d$ of the input $a$ and the output $y$. Then we can construct a chemical reaction network such that $\dot{x}(t)=f(x;a)$ such that $\dlim_{t\to \infty}x(t)=y$, where $x$ and $a$ represent the concentration of species in the chemical reaction network. Therefore it was theoretically shown that chemical reactions can be synthesized to carry out the computation of a feed-forward neural network.

%\tableofcontents
\section*{Significance Statment}
Chemical reaction networks describe how the counts or concentrations of chemical species change. By manipulating biochemical reactions, we can use them as a suitable language for coding computational algorithms like artificial neural networks (NNs). We have developed a dynamical system based on mass-action chemical reaction networks (CRNs) to facilitate the training of NNs equipped with smooth activation functions. Our proposed CRN implements both the feed-forward and backward processes of an NN. We have demonstrated that this CRN shows improved noise robustness compared to previously introduced CRNs that implement NNs with non-smooth activation functions.

\section{Introduction}
This paper explores the use of chemical reaction networks (CRNs) for implementing neural networks (NNs). In a broad sense, implementing NNs through biological systems represents a form of biological computing. Biological computers are designed using biologically derived objects such as DNA and proteins to perform digital computations, which are typically carried out by silicon chip-based classical computers. Due to their parallel processing abilities and energy efficiency, biological computation has attracted significant interest \cite{ezziane2005dna}. Advances in synthesizing specific chemical reactions using DNA strands and enzymes have enabled the development of biological computers capable of handling more intricate tasks \cite{soloveichik2010dna}.

Similar to programming languages used in classical computers, chemical reactions can act as programming languages in biological computing. Studies conducted both in simulation and in laboratory settings have demonstrated that chemical reaction networks (CRNs) can perform computations similar to those of artificial neural networks \cite{hjelmfelt1991chemical, hjelmfelt1992chemical, anderson2021reaction, xiong2022molecular, kim2004neural, vasic2022programming, lakin2014design, arredondo2022supervised, qian2011neural, blount2017feedforward}. In these studies, CRNs were designed to reach a unique stable steady state when the initial concentrations of certain chemical species are set as input, effectively producing the same output as neural networks. The dynamics of these chemical species are governed by a system of ordinary differential equations (ODEs) of the form $\dot x(t) = f(x(t); \hat x)$, where $x(t)$ represents the concentrations of the species over time and $\hat x$ is the input. This process can be viewed as a computing process, where setting the initial concentrations of certain species in the CRN to the input $\hat x$ of the neural network leads to the concentrations of the other species converging to a specific steady state $y(\hat x)$ as the desired output of the neural network.

%For instance, chemical reactions with binary activation functions have been proposed to construct a neural network calculating logic gates such as XOR \cite{qian2011neural}. Anderson et. al. \cite{anderson2021reaction} showed that when the activation functions of a neural network are smooth rectified linear unit (ReLU) functions of the form $\sigma(x)=\frac{x+\sqrt{x^2+h}}{2}$ with small $h>0$, a chemical reaction network can be constructed to implement the same computation as the neural network does. %This previous theoretical framework eventually suggests the idea of synthesizing chemical reaction networks that implement the computation of neural networks. 

 %Theoretical studies showed that chemical reaction networks can realize feed-forward neural networks with rectified linear unite (ReLU) activation functions \emph{in silico} \cite{kim2004neural, vasic2022programming, blount2017feedforward}. Intriguing experimental research showed that neural network computation via chemical reactions can be realized with DNA strands \cite{qian2011neural, okumura2022nonlinear,lee2023simple, xiong2022molecular}. 
%In \cite{banda2013online}, it was shown that learning algorithm for binary logic gates can be realized with chemical reactions that automatically update the weight parameters of neural networks based on the accuracy of the outputs. Similarly, using leaky ReLU activation functions, more general neural network computations were reproduced with chemical reaction networks \cite{lakin2023design}.  

While most of the previously designed chemical reaction networks for neural network computation aimed at performing the feed-forward network, few works focused on the training neural network via chemical reactions \cite{banda2013online, arredondo2022supervised, lakin2023design}. In \cite{banda2013online}, it was shown that a learning algorithm for binary logic gates can be realized with chemical reactions that automatically update the weight parameters of neural networks based on the accuracy of the outputs.
In \cite{arredondo2022supervised}, a chemical reaction network for supervised learning was proposed, where the hyperbolic tangent activation function was realized via chemical reactions and the networks are trained by a perturbation algorithm. In \cite{lakin2023design}, the authors considered a NN with the leaky ReLU, a non-smooth activation function. The NN is trained via chemical reactions that compute the piecewise derivatives of the leaky ReLU.

In previous research, activation functions like ReLU, leaky ReLU, and binary activation have been popular choices for chemical reaction computing due to their simple engineering. However, using these activation functions in chemical reactions may lead to potential issues with noise robustness, particularly during backpropagation, as these functions have discontinuities in their derivatives. Unlike standard neural networks, chemical reactions can experience random fluctuations in reaction rate parameters due to changes in temperature, pressure, and other external disturbances. These fluctuations can introduce small perturbations to each node of the neural network. If the activation function is non-smooth, these perturbations may amplify errors during backpropagation due to the discontinuities in their derivatives.
For example, the derivative $\sigma'(z)$ of ReLU ($\sigma(z)=z$ for $z\ge 0$ and $\sigma(z)=0$ otherwise) should be $1$ when $z=0.01$. However, even a small noise can change this value to be $-0.01$, resulting in a derivative of $0$, and potentially leading to significant errors. While experimental results have validated the accuracy of chemical reactions in performing the feed-forward process of neural networks, the use of non-smooth activation functions raises concerns about noise robustness, especially if the entire neural network computation, including backpropagation, is carried out through chemical reactions.
%We use the gradient descent algorithm to minimize the loss function of a neural network. In this way,  the typical process for training a neural network with smooth activation functions consists of two major steps: i) computing the gradients of the loss function with respect to the neural network parameters (the back propagation) and ii) finding the optimal parameters for the neural network using the gradients. %These tasks contain computing gradients of the loss function, updating the parameters, and iterating those previous steps. In this work, we construct chemical reactions whose associated dynamical system can be used to calculate the gradient of the loss function of a given neural network and iterate these steps to train the feed-forward  network.

In a recent study, an interesting concept was proposed that involves using binary expression to represent the presence of species in a steady state, which determines the parameters of a neural network \cite{vasic2022programming}. This method relies solely on the stoichiometric structures of the chemical reaction network, rather than the rate parameters. The study demonstrated that the weights of neural nodes can be set as binary or rational numbers, and these weights are determined by the stoichiometric structures, independent of the reaction rates. Therefore, if the training process is also carried out using chemical reactions, the stoichiometric structures of the feed-forward network would need to be updated by chemical species. However, implementing such procedures in chemical engineering is likely to be challenging, and it remains uncertain how this method would perform in the presence of noise.

The main goals of this paper is to provide a design of a chemical reaction network whose associated dynamical system trains a given artificial neural network equipped with smooth activation functions. This chemical reaction network is capable of performing full neural network computations, including NN parameter training (Figure \ref{fig:main schematic fig}). The reactions take place in a single container, allowing for both feed-forward processing and training to occur simultaneously. We have utilized the CRN proposed in \cite{anderson2021reaction} for the feed-forward part of the NN. Additionally, we developed a new CRN specifically for the backpropagation process. This new network plays a crucial role in computing the derivatives of smooth activation functions. The essential role of this CRN is to compute the derivatives of a smooth activation function. The building blocks for our CRN are the chemical reactions proposed in \cite{buisman2009computing} that compute basic arithmetic operations such as addition, subtraction, multiplication, and division.

In our research, we conducted case studies using the propose CRNs for the XOR, Iris and MNIST datasets, as well as non-linear regression models. Our findings demonstrated that CRNs implementing a neural network with smooth activation functions are resilient to noise in the rate constants and the species corresponding to nodes. This resilience is also supported by theoretical validation. In comparison, we have observed that CRNs implementing non-smooth activation functions can result in significant errors with small perturbations in the rate parameters due to the discontinuity of the derivatives of the non-smooth activation functions. Additionally, our research showed that our CRNs are more robust to variations in noise levels and the duration of the associated chemical reaction systems.

% The full process of chemical computing we suggested is described in.
% In the process of training via chemical reactions, we have two main steps. Firstly we construct chemical reaction networks for implementing arithmetic computations to eventually compute the gradient of the loss function.  For the second step, we designed chemical reaction networks for updating the parameters in the stochastic gradient descent scheme, where we used the derivatives obtained in the first step.

%Lastly, we randomly change the input from the training data sets, which is not implemented with a chemical reaction network. We imagine that the last part can be easily realized with a device that randomly injects the input into the pot after a predetermined time. %Lastly, we use stochastically modeled reaction networks, which induce switching between previously constructed chemical reaction networks. Hence using these reaction networks for switching, we can finally iterate the entire process including the forward neural network, the first step, and the second step we described before. 
%As a result, by combining all the chemical reaction networks used in the three steps, we ended up with a single chemical reaction network implementing the entire process of the neural network computation. Some species of the chemical reaction network evolve stochastically via Markov jump processes, and the majority of other species are modeled deterministically via systems of ODEs. 

\begin{figure}[!h]
    \centerline{\includegraphics[width=15cm]{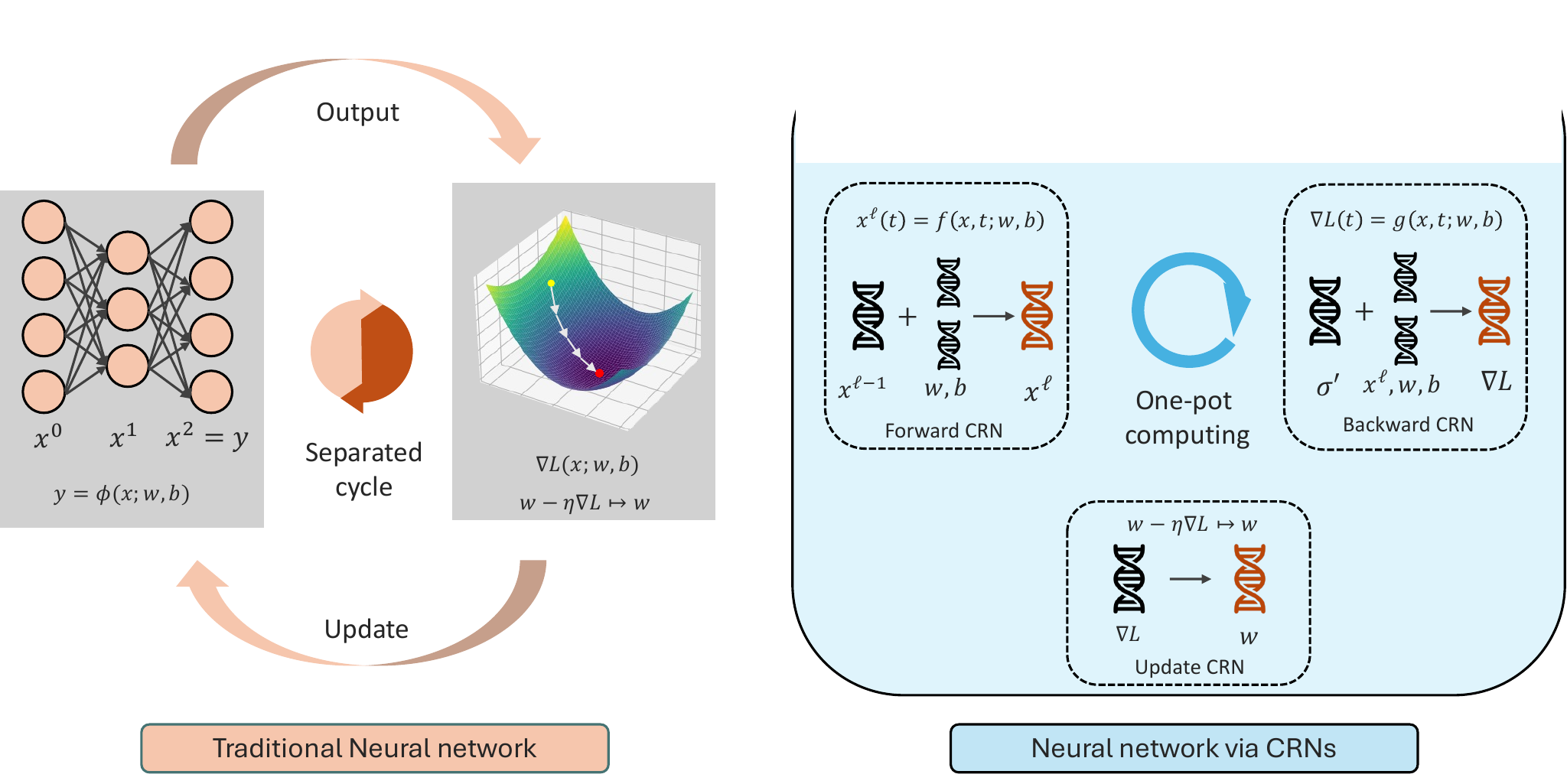}}
    \caption{Abstract of our work.}\label{fig:main schematic fig}
\end{figure}

\section{The key concepts}
In this paper, we used two different types of networks: chemical reaction networks and neural networks. To avoid confusion, we define them more precisely in this section. 
\subsection{Chemical reaction networks}

%We first define chemical reaction networks and their associated dynamical systems as mathematical objects. 
A chemical reaction network is a graph where the nodes represent complexes composed of combinations of chemical species, and the edges are reactions between complexes. Figure \ref{ex:crn1} shows an example of  chemical reaction networks.
\begin{figure}[!h]
    \centering
   \tikzset{state/.style={inner sep=2pt}}
   \begin{tikzpicture}[baseline={(current bounding box.center)}, scale=0.8]
   %states
   \node[state] (1) at (-.5,2)  {$H$};
   \node[state] (2) at (2,2)  {$X_1+H$};
   \node[state] (3) at (5,2)  {$X_2+H$};
   \node[state] (4) at (8,2)  {$2X_1+W$};
   \node[state] (4-1) at (8.5,2.3) {};
   \node[state] (4-2) at (8.5,1.7) {};
   \node[state] (5) at (11,2)  {$W$};
   \node[state] (6) at (13,2)  {$X_1$};
   \node[state] (7) at (16,2)  {$X_1+X_2$};
   \node[state] (8) at (14.5,0.5)  {$\emptyset$};
   %edges
   \path[->]
    (2) edge node[above]{$k_2$} (3)
    (1) edge node[above] {$k_1$} (2)
    (1) edge[bend right=25] node[below] {$k_3$} (3)
    (4-1) edge[bend left] node[above] {$k_4$} (5)
    (5) edge[bend left] node[below] {$k_5$} (4-2)
    (6) edge node[above] {$k_6$} (7)
    (7) edge node[below] {$k_7$} (8)
    (8) edge node[below] {$k_8$} (6);
  \end{tikzpicture}
    \caption{An example of chemical reaction networks}
    \label{ex:crn1}
\end{figure}
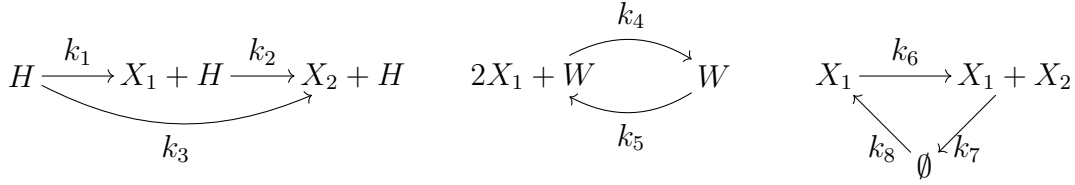

A precise mathematical formulation is as follows.
\begin{defn}\label{def:21}
\emph{A chemical reaction network} is given by finite sets $(\Sp,\C,\Re,\mathcal K)$ such that
\begin{enumerate}
\item  $\Sp=\{S_1,S_2,\cdots,S_d\}$ is a set of $d$ symbols, called the \emph{species} of the network,
\item $\C$ is a set of $\nu=\sum_{i=1}^d \nu_iS_i$, a linear combinations of $S_i$'s with $(y)_i\in \mathbb Z_{\ge 0}$ for each $i$, called \emph{complexes},
\item $\Re$ is a subset of $\C\times\C$, whose elements $(\nu,\nu')$ are typically denoted by $\nu\to \nu'$,
\item $\mathcal K$ is a set of reaction rates $\{k_1,\dots,k_{|\Re|}\}$, where each $k_i$ is assigned to each reaction. 
\end{enumerate}
\end{defn}

$\nu_i$'s in each complex $\nu=\sum_{i=1}^d \nu_iS_i$ are stoichiometric coefficients. The `zero' complex $\emptyset$ represents the external environment so that the birth and death of species are described as $\emptyset\to X$ and $X\to \emptyset$, respectively.  

 In the case of the chemical systems being spatially well-mixed, the associated dynamics can be defined as a system of ODEs on $\mathbb R^d_{\ge 0}$, which models  the time evolution of the concentrations of species in the chemical reaction network. 
The concentration of each species is modeled with $x(t)=(x_1(t),\dots,x_d(t))$ that is a solution to 
\begin{align}\label{eq:system of odes}
    \frac{d}{dt}x(t)=\sum_{\nu\to \nu'\in \Re} \lambda_{\nu\to \nu'}(x(t))\eta_{\nu\to \nu'}.
\end{align}
The stoichiometric vector $\eta_{\nu\to \nu'}$ gives the net gain of the reaction. For example, for the reaction network in Figure \ref{ex:crn1}, the stoichiometric vector of $2X_1+W\to W$ is $(-2,0,0,0)^\top$ as the system loses two $X_1$ species via this reaction while other species are unchanged. 
While kinetics-independent chemical reaction computing was suggested \cite{vasic2022programming}, our training processes via chemical reactions use mass-action, \begin{align}\label{mass}
\lambda_{\nu\rightarrow \nu'}(x)=\kappa_{\nu\rightarrow \nu'} \prod_{i=1}x_i ^{\nu_i},
\end{align}
for the reaction rate $\kappa_{\nu\to \nu'}$ assigned to $\nu\to \nu'$.

To calculate derivatives of the activation function, our chemical computing uses catalytic reactions. Catalytic reactions are known to be required to compute non-linear functions \cite{okumura2022nonlinear}. For the sake of clarity, we define two classes of species as suggested in \cite{anderson2021reaction}.
\begin{defn}
For a reaction network $(\Sp,\C,\Re,\mathcal K)$, we define two mutually disjoint sets $\Sp_c$ and $\Sp_d$ such that 
\begin{enumerate}
        \item $\Sp=\Sp_c \bigcup \Sp_d$,
    \item for each $S_i\in \Sp_c$, the $i$ the component of each stoichiometric vector is zero, i.e. we have $(\eta_{\nu \to \nu'})_i=0$ for any $\nu\to \nu'\in \Re$.
\end{enumerate}
We call species in $\Sp_c$ \emph{catalytic species} and call species in $\Sp_d$ \emph{dynamics species}.
\end{defn}
For example, in the reaction network \eqref{ex:crn1}, $\Sp_c=\{W,H\}$ and $\Sp_d=\{X_1, X_2\}$. 
Note that the concentrations of catalytic species remain unchanged unlike dynamic species. 

\begin{rem}
The role of species in the CRN for the full neural network is interchangeable.
The CRN consists of three sub-networks: $(\Sp_f,\C_f,\Re_f,\mathcal K_f)$, $(\Sp_b,\C_b,\Re_b,\mathcal K_b)$ and $(\Sp_u,\C_u,\Re_u,\mathcal K_u)$ implement the feed-forward neural network, the backpropagation, and updating the neural network parameters, respectively.
In $(\Sp_f,\C_f,\Re_f,\mathcal K_f)$, we use the concentrations of the catalytic species for either the inputs or the parameters of the feed-forward neural network computation \cite{anderson2021reaction}. However, in $(\Sp_u,\C_u,\Re_u,\mathcal K_u)$, the species that were catalytic in the feed-forward sub-network become dynamic species, allowing the parameters in the feed-forward part to be updated for training the neural network.
\end{rem}

\subsection{Artificial neural networks}\label{sec:nn}

 In mathematical terms, a NN is a combination of a weighted sum, where each weighted sum is connected to an activation function. The feed-forward part of the NN computes the composition of activation functions with the weighted sum. Common activation functions include sigmoidal functions, hyperbolic tangent, and ReLU functions. This paper primarily uses the smoothed ReLU function as an activation function, which is defined as
\begin{align}\label{eq:smooth relu}
    \sigma(z) = \frac{z+\sqrt{z^2+4h}}{2},
\end{align}
where $h$ determines the smoothness of $\sigma(x)$ around $x=0$ (Figure \ref{fig:activation and nn} A). Note that if $h=0$, then $\sigma$ is the standard ReLU function. %In Appendix \ref{app:other act}, we discuss CRNs training a neural network with either sigmoidal or hyperbolic tangent activation functions.

 The backpropagation process involves updating the weights of the feedforward part by using the gradient of the loss function with respect to the weights, based on the chain rule \cite{rumelhart1986learning}. One of the main results of this paper is the development of a CRN that implements the backpropagation algorithm for a NN with the smooth ReLU activation function. We will compare the performance of our CRN with the CRN that implements the backpropagation algorithm for a NN with the leaky ReLU activation function (Figure \ref{fig:activation and nn} B), defined as 
\begin{align}\label{eq:leaky Relu}
    \sigma_{Leaky}(z)=\begin{cases}
        \alpha z \quad &\text{if $z\ge 0$}\\
        \beta z &\text{if $z<0$},
    \end{cases}
\end{align}
where $0<\beta\ll 1$. This activation function was utilized in \cite{lakin2023design} for training a NN via CRNs.

We will start by providing precise notations for NNs used to explicitly describe the CRNs carrying out NNs in the later sections.
Let $G=(V,D)$ denote a NN, where $V$ and $D$ are the set of nodes and edges in the neural network, respectively. The concepts for feed-forward neural networks are illustrated below.  Please refer to Figure \ref{fig:activation and nn} C for a graphical representation.

\begin{enumerate}
\item The first and the last layers referred to as the input and output layers, respectively.Any layers in between are called hidden layers.

\item The value assigned to the $i$ th node (neuron) on the $\ell$ th layer is denoted by $x^\ell_i$.      
    \item The number of the nodes on the $\ell$ th layer is denoted by $M^\ell$, which represents the width of the $\ell$ th layer.
    \item  We sometimes use a vector $x^\ell=(x^\ell_1,\dots,x^\ell_{M^\ell}) \in \mathbb R^{M^\ell}$.
    \item The function $\sigma:\mathbb R\to \mathbb R$ denotes the activation function. Abusing a notation, we define a multi-variable function $\sigma:\mathbb R^m\to \mathbb R^m$ for $m$ such that the $i$ th component of $\sigma(z)\in \mathbb R^m$ is $\sigma(z_i) \in \mathbb R$ for a $z\in \mathbb R^m$. 
    
    \item Let $w^{\ell} \in \mathbb R^{M^{\ell+1}\times M^{\ell}}$ denote the weight matrix and let $b^\ell \in \mathbb R^{M^{\ell+1}}$ denote the bias vector. Then we define $z^\ell = w^\ell x^\ell +b^\ell \in  \mathbb R^{M^{\ell+1}}$ and $x^{\ell+1}=\sigma(z^\ell) \in \mathbb R^{M^{\ell+1}}$ for $\ell\in\{0,1,\dots,L-1\}$, where $\sigma(z^\ell)_i=\sigma(z^\ell_i)$ for $i\in\{1,2,\dots,M^{\ell+1}\}$. 
    \item The elements of $w^\ell$ and $b^\ell$ are denoted by $w^\ell_{ij}$ and $b^\ell_i$. 
    \item The values $x^1_i$ and $x^L_i$ assigned to the $i$ th node in the input layer $V^1$ and the output layer $V^L$ are specially denoted by $\hat x_i$ for $i\in \{1,\dots,M^1\}$ and $y_i$ for $i\in \{1,\dots,M^L\}$, respectively.  Hence $x^1=\hat x\in \mathbb R^{M^1}$ and $x^L=y\in \mathbb R^{M^L}$.
    
\end{enumerate}
If we let $x^{\ell+1}=\psi^{\ell+1}(x^\ell):=\sigma(z^\ell)$ for $\ell=0,1,\dots, L-1$, the output of the neural network is simply written as 
\begin{align*}
    y=\psi^L \circ \psi^{L-1} \circ \cdots \circ \psi^{3} \circ \psi^2(a).
\end{align*}
For a given feed-forward neural network $(V,D)$ with the input $\hat x \in \mathbb R^{M^1}$ and the output $y\in \mathbb R^{M^L}$, for instance, a CRN $(\Sp,\C,\Re,\mathcal K)$ can be constructed so that the associated deterministic model $x(t)=(x_d(t), x_e(t))$ such that
\begin{align*}
    \frac{d}{dt} \begin{pmatrix}
        x_d(t) \\
        x_c(t)
    \end{pmatrix} = f(x_d(t), x_c(t))
\end{align*}
satisfies $\dlim_{t\to \infty}x_d(t)=y$ with $x_c(0)=\hat x$, where $x_d$ and $x_c$ represent the concentrations of the dynamical and the catalytic species, respectively.

\begin{figure}[!h]
    \centering
\includegraphics[width=1\linewidth]{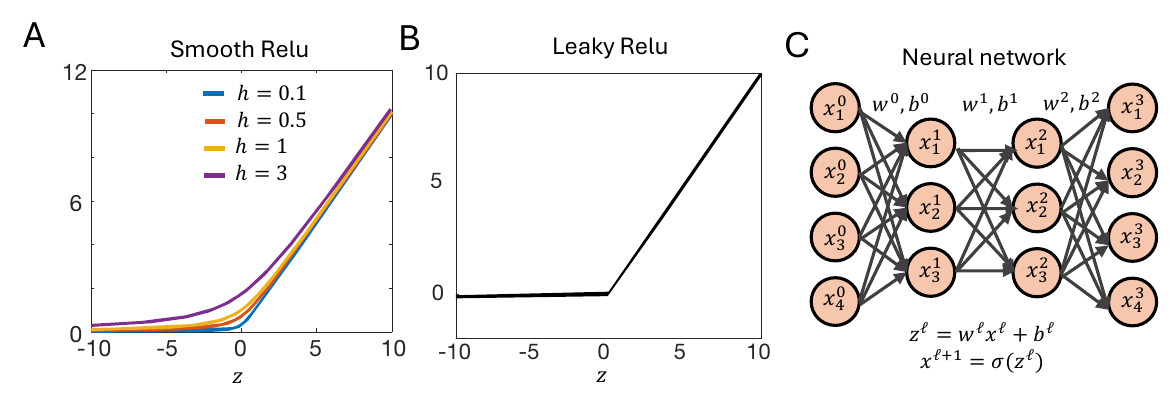}
    \caption{A. The graph of the smooth ReLU with different smoothing parameters $h$. B. The graph of the leaky ReLU. C. An example of neural networks.}
    \label{fig:activation and nn}
\end{figure}

\subsubsection{Gradient of the loss function and derivatives of smooth activation functions.}\label{sec: gradient descent}
%\jinsunote{I think this section can be removed.}
Training NNs is a procedure to find a weight matrix and a bias vector that minimize the error of the feed-forward neural network. The error can be measured using a loss function. 
Throughout this manuscript, we define the loss function defined as $\mathcal L(y)=|y-\hat y|^2$, where $y \in \mathbb R^{M^L}$ is the output of the neural network and $\hat y \in \mathbb R^{M^L}$ is the output of the training data sets.

When minimization is performed with the (stochastic) gradient descent algorithms, the main task is to compute the partial derivatives of the loss function $\mathcal L(y)=|y-\hat y|^2$ with respect to the parameters of the neural network $w^\ell_{ij}$'s and $b^\ell_i$'s. Backpropagation is typically initiated from the output layer $\ell=L$. For suitable indices $i$ and $j$ we have that
\begin{align}
&\frac{\partial \mathcal L}{\partial x^{L}_{i}}=\frac{\partial \mathcal L}{\partial y_{i}}=2(y_i-\hat y_i), \quad \frac{\partial \mathcal L}{\partial z^{\ell}_{i}} = \sum_j \frac{\partial \mathcal L}{\partial x^{\ell+1}_{j}}  \frac{\partial x^{\ell+1}_{j}}{\partial z^{\ell}_{i}} \quad \text{and}\notag \\
& \frac{\partial \mathcal L}{\partial x^{\ell}_{i}}
=\sum_j \frac{\partial \mathcal L}{\partial z^{\ell}_{j}}  \frac{\partial z^{\ell}_j}{\partial x^{\ell}_{i}} 
=\sum_j \frac{\partial \mathcal L}{\partial z^{\ell}_{j}}  w^\ell_{ji}, \label{eq:L derivative x}
\end{align}
for $\ell \in \{1,...,L-1\}$. Then
 gradients of $\mathcal L$ can be recursively written as
\begin{align}
&\nabla_{x^\ell} \mathcal L = \left(W^\ell\right)^T\left(\nabla_{x^{\ell +1}}\mathcal L \odot \nabla_{z^\ell} x^{\ell+1}\right) = \left(W^\ell\right)^T\left(\nabla_{z^{\ell}} \mathcal L\right),\label{eq:derivative of L wrt x}\\
&\nabla_{w^\ell} \mathcal L = \left(\nabla_{x^{\ell+1}} \mathcal L \odot \nabla_{z^\ell} x^{\ell+1}\right)\left(x^{\ell}\right)^{T} = \left(\nabla_{z^{\ell}} \mathcal L\right)\left(x^{\ell}\right)^{T} ,\label{eq:derivative of L wrt w} \\
&\nabla_{b^\ell}\mathcal L=\left( \nabla_{x^{\ell+1}}\mathcal L  \odot \nabla_{z^\ell} x^{\ell+1}\right) = \nabla_{z^{\ell}} \mathcal L \label{eq:derivative of L wrt b},
\end{align}
where $\nabla_{z^\ell} x^{\ell+1}$ is a vector whose $j$ th component is $\dfrac{\partial \sigma(z^\ell)_j}{\partial z^\ell_j}$, and $\odot$ is the Hadamard product \cite{horn1990hadamard}, which stands for term-wise multiplication.
 Regarding the recursive relation \eqref{eq:derivative of L wrt x}--\eqref{eq:derivative of L wrt b}, the key term of the backpropagation is $\nabla_{z^\ell} x^{\ell+1}$ for each $\ell$.

The derivative of the smoothed ReLU function $\sigma(z)=\frac{z+\sqrt{z^2+4h}}{2}$ is 
\begin{equation}\label{eq:derivative of relu}
\sigma'(z) = \frac{z+\sqrt{z^2+4h}}{2\sqrt{z^2+4h}} = \sigma(z)^2\frac{1}{\sqrt{z^2+4h}}\frac{2}{z+\sqrt{z^2+4h}}=\frac{\sigma(z)^2}{\sigma(z)^2+h}.
\end{equation}
Then if $x^{\ell+1}_j=\sigma(z^\ell_j)$ for each $\ell$ and $j$, then
\begin{align*}
    \nabla_{z^\ell}x^{\ell+1}=\left (\frac{(x^{\ell+1}_1)^2}{(x^{\ell+1}_1)^2+h},\dots,\frac{(x^{{\ell+1}}_{M^\ell})^2}{(x^{\ell+1}_{M^\ell})^2+h} \right )
\end{align*}
Remarkably, these derivatives can be simply computed by addition and division of the nodes $x^\ell_i$'s that are computed in the feed-forward part of a NN.

\section{CRNs for training}
%Neural networks, namely forward networks can be implemented via chemical reaction networks. In this paper, we are going to implement the gradient descent algorithm via chemical reaction networks. Our network must perform two key tasks: computing gradients and updating the values of weights and biases. 
%Neural network computations basically consist of the feed-forward part and the training processes. 
In this section, we will discuss the construction of Chemical Reaction Networks (CRNs) used for computing the gradients of the loss function $L(y)=|y-\hat y|^2$ and for updating the parameters accordingly.
The CRNs responsible for computing gradients will be referred to as the "backward CRN," and those responsible for updating parameters as the "update CRN." We will denote the backward CRN and the update CRN by $(\Sp^b, \C^b, \Re^b,\K^b)$ and $(\Sp^u, \C^u, \Re^u,\K^u)$, respectively.
For the complete neural network computation, we will utilize the CRN provided in \cite{anderson2021reaction}, which implements a feed-forward NN with the smooth ReLU function. This CRN will be referred to as the "forward CRN" and will be denoted by $(\Sp^f, \C^f, \Re^f,\K^f)$.
To maintain consistency, we will use capital letters to represent species and small letters to denote their concentrations. For example, the species corresponding to a node $x\ell_i$ will be denoted by $X^\ell_i$, and its concentration at time $t$ will be denoted by $x^\ell_i(t)$.

\subsection{Backward CRN}\label{sec: backward network}
 The derivative $\sigma '$ is the key term to calculate the loss function gradient. When smooth ReLU is used, the values in the nodes are positive as $x^{\ell}=\sigma(z^{\ell-1})$. Hence, for the derivative of the smooth ReLU given in \eqref{eq:derivative of relu}, we employ the CRNs to calculate the addition and division of positive real numbers suggested in \cite{buisman2009computing} without introducing dual rails.

Let $X^\ell_i \in \Sp^f$ be the species corresponding to 
node $x^\ell_i$ in the feed-forward network. Using this as an input, the CRN for computing $\dfrac{\partial x^{\ell}_i}{\partial z^{\ell-1}_i}=\sigma'(z^{\ell-1}_i)$ is 
\begin{align}
\begin{split}\label{eq:crn for div and add}
    &2X^\ell_i\xrightarrow{\kappa_1} 2X^\ell_i+DX^\ell_i, \quad DX^\ell_i+A^\ell_i \xrightarrow{\kappa_2}  A^\ell_i\\
    &A^\ell_i\xrightarrow{\kappa_3}  \emptyset, \quad 2X^\ell_i\xrightarrow{\kappa_4}  2X^\ell_i+A^\ell_i, \quad H\xrightarrow{\kappa_5}  H+A^\ell_i.
    \end{split}
\end{align}
If we set rate constants as $\dfrac{\kappa_5 \kappa_2}{\kappa_3}=\dfrac{\kappa_4 \kappa_2}{\kappa_3}=\kappa_1$, then the concentration of $DX^\ell_i$, which represents the derivative $\dfrac{\partial x^\ell_i}{\partial z^{\ell-1}_i}=\sigma'(z^{\ell-1}_i)$, convergences to $\dfrac{(x^\ell_i)^2}{(x^\ell_i)^2+h}$. That is, 
$\dlim_{t\to \infty}dx^{\ell+1}_i(t)=\dlim_{t\to \infty}\dfrac{(x^\ell_i)^2(t)}{(x^\ell_i)^2+h}=\dfrac{(x^\ell_i)^2}{(x^\ell_i)^2+h}$. $A^\ell$ is a species computing the summation $(x^\ell_i)^2+h$, and species $H$ used in $(\Sp^f, \C^f, \Re^f,\K^f)$ corresponds to the smoothness parameter $h$ in \eqref{eq:smooth relu}. %The restriction on $\kappa_i$'s can be slightly relaxed as described in Appendix ooo.

By multiplying and summing these derivative terms with $w^\ell_{ij}$ and $b^\ell_i$'s, the gradients of $\mathcal L$ can be computed recursively from $\ell=L$ to $\ell=1$ as described in Section \ref{sec: gradient descent}. Hence the full backward network consists of further sub-CRNs that compute multiplication and addition, and the $W^{\ell,\pm}_{ij}$ and $B^{\ell,\pm}_{i}$ species are used. It's important to note that the weights and biases can be negative, so we use the dual rails $W^{\ell,\pm}_{ij}$ and $B^{\ell,\pm}_{i}$ to ensure that the differences of two non-negative values $w^{\ell,+}_{ij}-w^{\ell,-}_{ij}$ and $b^{\ell,+}_i-b^{\ell,-}_i$ can represent negative values. The terms such as $\dfrac{\partial \mathcal L}{\partial x^\ell_i}$ and $\dfrac{\partial \mathcal L}{\partial w^\ell_{ij}}$ that are computed with $w^{\ell,\pm}_{ij}$ and $b^{\ell,\pm}_i$'s  are also realized with dual rails. For example, for a simple two-layered NN illustrated in Figure \ref{fig: structure}, to compute $\dfrac{\partial \mathcal L}{\partial x^1_1}=\dfrac{\partial \mathcal L}{\partial z^1_1}w^1_{11}$, we use
\begin{align*}
     &LX^{1,+}_1 \xrightarrow{\kappa_6} \emptyset, \quad LZ^1_1+W^{1,+}_{11}\xrightarrow{\kappa_7}  LZ^1_1+W^{1,+}_{11}+LX^{1,+}_1\\
    &LX^{1,-}_1 \xrightarrow{\kappa_8}  \emptyset, \quad LZ^1_1+W^{1,-}_{11}\xrightarrow{\kappa_9}  LZ^1_1+W^{1,-}_{11}+LX^{1,-}_1.
\end{align*}
 If $\kappa_i=1$ for $i=6,7,8,9$, then $\dlim_{t\to \infty}\left( lx^{1,+}_1(t)-lx^{1,-}_1(t)\right )=\dlim_{t\to \infty}lz^1_1(t)(w^{1,+}_1(t)-w^{1,-}_1(t))= \dfrac{\partial \mathcal L}{\partial z^1_1} w^1_1$. %Here, the concentration of species $LZ^1_1$ converges as $\dlim_{t \to \infty}lz^1_1(t)=\dfrac{\partial \mathcal L}{\partial z^1_1}$. This convergence is also made by other reactions.  
 
To show an example of the full construction, we provide the list of reactions in $(\Sp^b, \C^b, \Re^b,\K^b)$ and $(\Sp^u, \C^u, \Re^u,\K^u)$ for a simple two-layered NN in Section \ref{sec: full backward and update}.
We further visualize the architecture of $(\Sp^b, \C^b, \Re^b,\K^b)$ and $(\Sp^u, \C^u, \Re^u,\K^u)$ in Figure \ref{fig: structure}.

\begin{rem}\label{rem:feedback free}
  he arrows going into and out from a green box in Figure \ref{fig: structure} represent the catalyst-product relation. In other words, the species at the incoming arrows act as the catalyst species, while the species at the outgoing arrow are the dynamic species. These arrows in the backward CRN are always one-sided, and this also applies to the forward CRN \cite{anderson2021reaction}. Essentially, this means that the combination of the forward and backward networks has a 'feedback-free' structure. Specifically, if species $S_1$ acts as the catalyst to produce $S_2$, then $S_2$ never serves as a catalyst to produce $S_1
  $. This plays a crucial role in the convergence of the forward and backward networks, as well as in the noise analysis (See Section \ref{sec:one pot} and Section \ref{sec:noise analysis}).

   In contrast, the update network gives a feedback loop to the forward network. We will slow down these feedback reactions to prevent disruption of the forward and backward CRNs (See Section \ref{sec:update}).
\end{rem}

% \begin{rem}
% In Appendix \ref{app:conv speed}, we showed that the concentrations of the species in $(\Sp^b, \C^b, \Re^b,\K^b)$ convergences exponentially fast in time. 
% \end{rem}

\begin{figure}[!h]  \centerline{\includegraphics[width=1.1\textwidth]{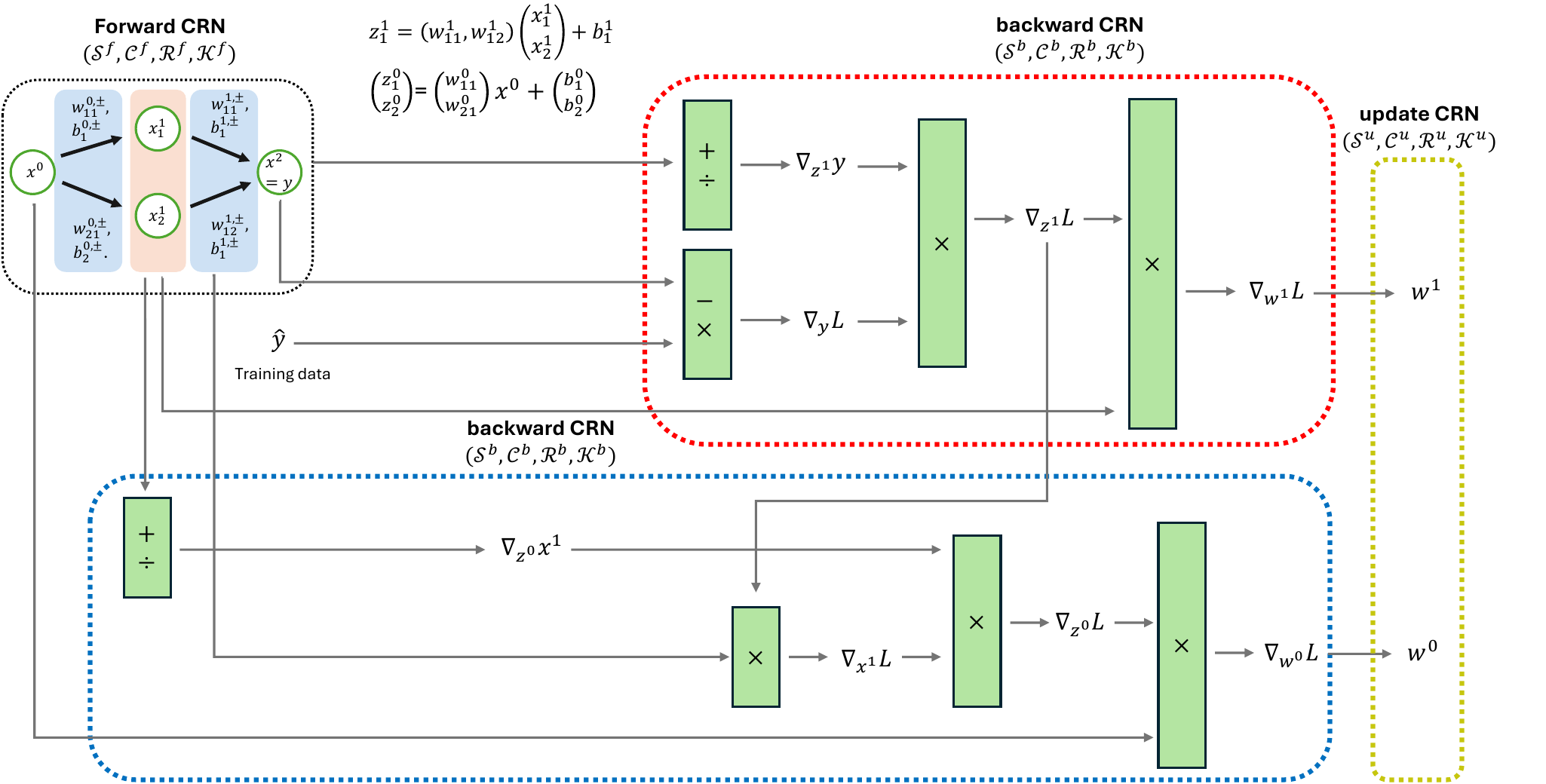}}
    \caption{Overall structure of the CRN implementing the backpropagation of a two-layered NN. The green boxes indicate a CRN that calculates the designated arithmetic operations. Each green box has two input species indicated by the incoming arrows.  The outgoing arrows indicate the output species. The full lists of reactions for the backward and the update CRNs are provided in Section \ref{sec: full backward and update}.}\label{fig: structure}
\end{figure}

%For the case of a simple chemical reaction network which represents a generation of chemical species $X^{+}$ from another chemical species $A^{+}$, we use diagram as figure \ref{fig:generated from A}

\subsection{Update network}\label{sec:update}
We continuously update the weights using the gradient of the loss function, rather than using the usual iterative update.
Let $W^{\ell,+}_{ij}$ and $W^{\ell,-}_{ij}$ be species corresponding to the weight $w^\ell_{ij}=w^{\ell,+}_{ij}-w^{\ell,-}_{ij}$  in the forward CRN $(\Sp^f,\C^f,\Re^f,\K^f)$.   Then we build the update network using the following reactions 
\begin{align}\label{eq:update network}
    LW^{\ell,-}_{ij} \xrightarrow{\eta} LW^{\ell,-}_{ij} + W^{\ell,+}_{ij}, \quad  LW^{\ell,+}_{ij} \xrightarrow{\eta}  LW^{\ell,+}_{ij} + W^{\ell,-}_{ij},
\end{align}
where $LW^{\ell,\pm}_{ij}$ are the species that compute the gradient in the backward network $(\Sp^b,\C^b,\Re^b,\K^b)$ as
\begin{align*}
    \lim_{t\to \infty}\left (lw^{\ell,+}_{ij}(t)-lw^{\ell,-}_{ij}(t) \right )=\frac{\partial \mathcal L}{\partial w^\ell_{ij}}.
\end{align*}
We have that from \eqref{eq:update network}
\begin{align}\label{eq:update}
    \frac{d}{dt}(w^{\ell,+}_{ij}-w^{\ell,+}_{ij}) = -\eta \left (lw^{\ell,+}_{ij}(t)-lw^{\ell,-}_{ij}(t) \right ),
\end{align}
  leading to the continuous update of $w^{\ell}_{ij}$ with the drift given by $lw^{\ell,+}_{ij}$ and $lw^{\ell,-}_{ij}$. This continuous update indeed mimics the usual iterative update in the following sense. The continuous update 
 \eqref{eq:update} yields that 
\begin{align}\label{eq:solve w}
   w^\ell_{ij}(T)= w^{\ell,+}_{ij}(T)-w^{\ell,-}_{ij}(T) = \left( w^{\ell,+}_{ij}(0)-w^{\ell,-}_{ij}(0) \right)-\eta \int_0^{T} (lw^{\ell,+}_{ij}(s)-lw^{\ell,-}_{ij}(s))ds,
    \end{align}
     In Appendix \ref{app:conv speed}, we show that the species in the backward CRN including  $LW^{\ell,\pm}_{ij}$'s converge exponentially quickly to the steady state. Therefore for a large $T$ and for a sufficiently small $\eta$, \eqref{eq:solve w} can be approximated as
    \begin{align}
         w^\ell_{ij}(T)= w^{\ell,+}_{ij}(T)-w^{\ell,-}_{ij}(T) &\approx \left( w^{\ell,+}_{ij}(0)-w^{\ell,-}_{ij}(0) \right)-\eta T\lim_{t\to \infty}(lw^{\ell,+}_{ij}(T)-lw^{\ell,-}_{ij}(T)) \notag \\
         &=\left( w^{\ell,+}_{ij}(0)-w^{\ell,-}_{ij}(0) \right)-\eta T\frac{\partial \mathcal L}{\partial w^\ell_{ij}}. \label{eq:app iterative}
    \end{align}
\eqref{eq:app iterative} is equivalent to updating the parameters using the iterative update in the gradient descent scheme with the learning rate $\eta T$.

Similarly, we can update species $B^{\ell,+}_{ij}$ and $B^{\ell,-}_{ij}$ that represent the bias parameter $b^\ell_{ij}$ using a species, namely $LZ^\ell_i$, corresponding to $\frac{\partial \mathcal L}{\partial z^\ell_i}=\frac{\partial \mathcal L}{\partial b^\ell_i}$. In Section \ref{sec:one pot}, the convergence of the full CRN and the update of the parameters in the manner of a one-pot computation are more precisely described.
% \begin{figure}[!h]
%     \centerline{\includegraphics[width=17cm]{time variation no noise.png}}
%     \caption{(XOR) Compute Y and W2 for a randomly selected input at $t=500k$ for $k=1,2,\dots$.(dashed line)}
% \end{figure}
% \subsection{Implement update network via chemical reaction networks}
% To update the value of $X$ as calculated output $A$, we need the network
% \begin{align*}
%     &A^+\to A^++X^+\\
%     &X^+\to\\
%     &A^-\to A^-+X^+\\
%     &X^-\to.
% \end{align*}
% The differential equation for $x^+, x^{-}$ is given by 
% $$\dot{x^+} = a^+-x^+$$
% $$\dot{x^{-}} = a^{-}-x^{-}$$
% whose solution is
% $$x^+ = {a^+}-({a^+}-x^+_0)e^{-t}$$
% $$x^{-} = {a^{-}}-({a^{-}}-x^{-}_0)e^{-t},$$
% with steady state
% $$\hat{x^+} = a^+$$
% $$\hat{x^{-}} = {a^{-}}.$$
% So $x$ converges to 
% $$\hat{x^+}-\hat{x^{-}} = a^+ - {a^{-}}.$$

\subsection{One-pot computing and convergence}\label{sec:one pot}
The proposed CRN implements both the feed-forward and the training processes of a neural network in a single pot. This means that all reactions occur simultaneously without engineered circuits or oscillations that are used to run the forward and trainning parts separately. The previously proposed CRNs for trainning NNs use oscillatory clock-proteins to sequentially regulate reactions \cite{banda2013online, lakin2023design} for interactive updates of the weights and biases of NNs. 
The main task of performing one-pot computing was to design the update network introduced in Section \ref{sec:update}. This is because as the species in the update CRN gradually change, the convergence of the species in the forward and backward CRNs holds even when their reactions take place simultaneously. In this section, we will validate this. %In a classical gradient descent scheme, each parameter $w^\ell_{ij}$ and $b^\ell_i$ is updated by $w^\ell_{ij}-k_u \frac{\partial \mathcal L}{\partial w^\ell_{ij}}  \mapsto	 w^\ell_{ij}$ at each iteration. Rather than this iterative update, the parameters are updated continuously in our update network. %While he classical iterative update can be realized using CRNs for multiplication and subtraction, the updated parameter can affect the calculation of the gradients if all the chemical reaction networks run simultaneously for one-pot computing. In other words, there is `feedback' from the update network to the backward network, and such feedback disturbs the desired process in the one-pot computation.
\begin{figure}[!h]
    \centering
\includegraphics[width=1.05\linewidth]{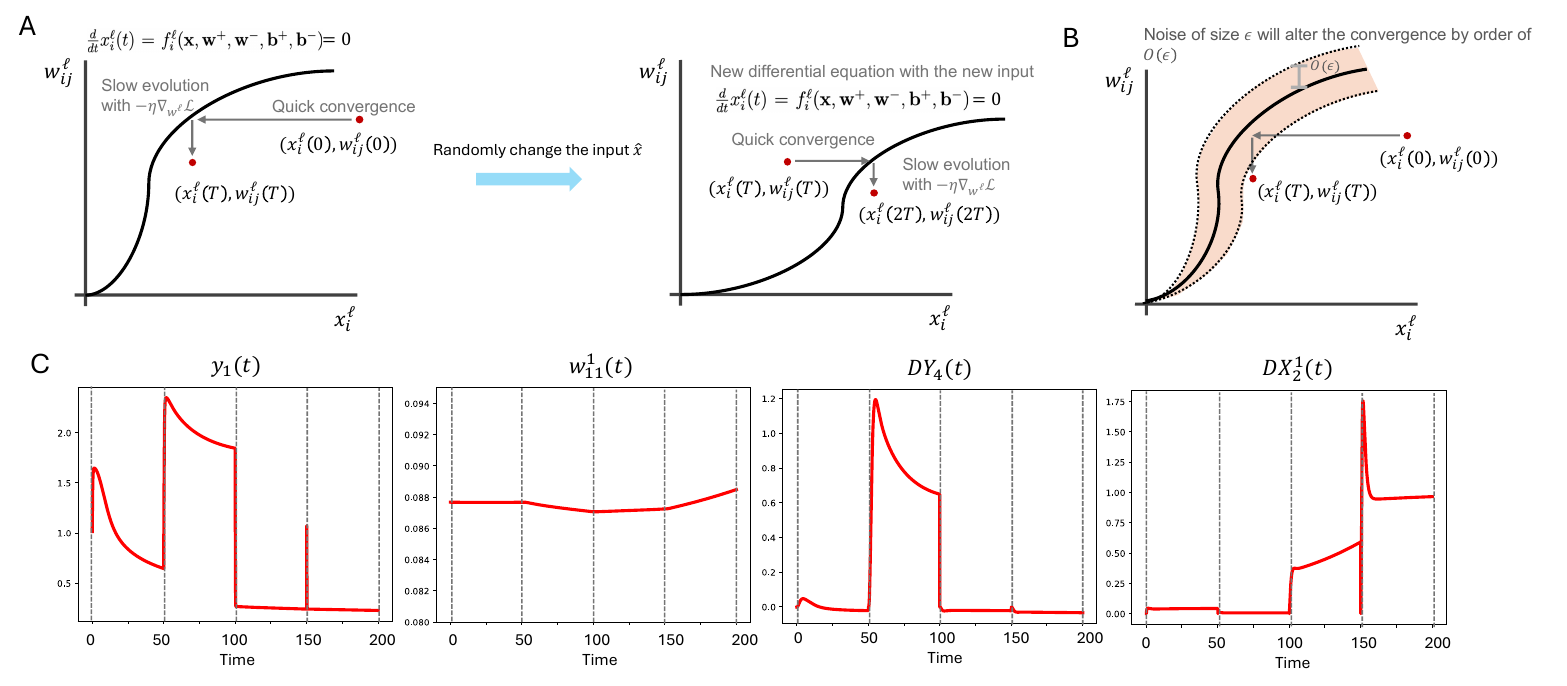}
    \caption{A and B. A schematic description of the convergence of the backward and the update networks without noise (A)  and with noise (B). C. The time evolutions of the species. The vertical dotted lines indicate the times when the input of the NN is changed randomly. In particular, the concentration $w^1_{11}(t)$ of the species $W^1_{11}$ corresponding to a weight parameter evolves slowly. }
    \label{fig:update and time evolution}
\end{figure}

Although it is difficult to thoroughly validate the convergence of the one-pot computation, we can show the convergence of the leading-order term of the species using a perturbation analysis regarding $\eta$ in \eqref{eq:update network} as a scaling parameter. We consider the system of differential equations for $x^\ell_i, w^{\ell,\pm}_{ij}$ and $b^{\ell,\pm}_i$'s as a perturbed system by the updated network with the perturbation parameter $\eta$.

To consider perturbed solutions, for $t$ on a compact time interval $[0,T]$, we decompose the concentrations of the species as
\begin{align*}
    &x^\ell_i(t)=x^\ell_{i,0}(t)+\eta x^\ell_{i,1}(t) + O(\eta^2),\\
&w^{\ell,\pm}_{ij}(t)=w^{\ell,\pm}_{ij,0}(t)+\eta w^{\ell,\pm}_{ij,1}(t) + O(\eta^2), \text{ and}\\
&b^{\ell,\pm}_{i}(t)=b^{\ell,\pm}_{i,0}(t)+\eta b^{\ell,\pm}_{i,1}(t) + O(\eta^2).
\end{align*} 
Note that the concentration $x^\ell_i(t)$ of the species corresponding to a node follows a differential equation of the form $\frac{d}{dt}x^\ell_i(t)=f^\ell_i(\mathbf x, \mathbf w^+, \mathbf w^-, \mathbf b^+,\mathbf b^-)$ for some function $f^\ell_i$, where $\mathbf x$ and $\mathbf b^{\pm}$ denote matrices whose $\ell i$ components are $x^\ell_{i}$ and $b^\ell_{i}$, and the $\ell i j$ component of $\mathbf w^\pm$ is $w^{\ell,\pm}_{ij}$, respectively. We similarly denote by $\mathbf x_k$ and $\mathbf b^{\pm}_k$  the matrices whose $\ell i$ components are $x^\ell_{i,0}$ and $b^\ell_{i,k}$, and we let the $\ell i j$ component of $\mathbf{w^\pm}_0$ be $w^{\ell,\pm}_{ij,k}$, respectively, for $k=1, 2$.
Importantly, due to smoothness, $f^\ell_i$ can also be decomposed with the Talyor expansion as 
\begin{align*}
   \frac{d}{dt}x^\ell_i(t)&=f^\ell_i(\mathbf x_0, \mathbf w^+_0, \mathbf w^-_0, \mathbf b^+_0, \mathbf b^-_0)+
    \nabla f^\ell_i((\mathbf x_0, \mathbf w^+_0, \mathbf w^-_0, \mathbf b^+_0, \mathbf b^-_0)) \cdot \Delta + O(\eta^2)\\
    &=f^\ell_i(\mathbf x_0, \mathbf w^+_0, \mathbf w^-_0, \mathbf b^+_0, \mathbf b^-_0)+
    \eta \nabla f^\ell_i((\mathbf x_0, \mathbf w^+_0, \mathbf w^-_0, \mathbf b^+_0, \mathbf b^-_0) \cdot  (\mathbf x_1, \mathbf w^+_1, \mathbf w^-_1, \mathbf b^+_1, \mathbf b^-_1)^\top+ O(\eta^2),
\end{align*}
where $\Delta=(\mathbf x, \mathbf w^+, \mathbf w^-, \mathbf b^+, \mathbf b^-)^\top- (\mathbf x_0, \mathbf w^+_0, \mathbf w^-_0, \mathbf b^+_0, \mathbf b^-_0)^\top=\eta(\mathbf x_1, \mathbf w^+_1, \mathbf w^-_1, \mathbf b^+_1, \mathbf b^-_1)^\top + O(\eta^2)$.
Note that the differential equations for species $W^{\ell,\pm}_{ij}$ and $B^{\ell,\pm}_i$ are evolving with catalysts $LW^{\ell,\pm}_{ij}$ and $LB^{\ell,\pm}_i$ as shown in \eqref{eq:update network}, where $LW^{\ell,\pm}_{ij}$ and $LB^{\ell,\pm}_i$ correspond to the derivatives $\dfrac{\partial \mathcal L}{\partial w^\ell_{ij}}$ and $\dfrac{\partial \mathcal L}{\partial b^\ell_{i}}$. These are represented as summation, product, and division (with strictly positive denominators) of $x^\ell_i, w^{\ell,\pm}_{ij}$ and $b^{\ell,\pm}_i$ by the backward CRN (Section \ref{sec: backward network}). Hence we can view $w^{\ell,\pm}_{ij}$ and $b^{\ell,\pm}_i$ as the solutions of 
\begin{align*}
     &\frac{d}{dt}w^{\ell,\pm}_{ij}(t)=-\eta \times lw^{\ell,\mp}_{ij}(t):=\eta g^{\ell,\pm}_{ij}(\mathbf x, \mathbf w^+, \mathbf w^-, \mathbf b^+, \mathbf b^-)\\
     &\frac{d}{dt}b^{\ell,\pm}_{i}(t)=-\eta \times lz^{\ell,\mp}_{i}(t):=\eta q^{\ell,\pm}_{i}(\mathbf x, \mathbf w^+, \mathbf w^-, \mathbf b^+, \mathbf b^-),
\end{align*}
where the functions $g^{\ell,\pm}_{ij}$ and $q^{\ell,\pm}_{i}$ are smooth with respect to $\mathbf x, \mathbf w^+, \mathbf w^-, \mathbf b^+$, and $\mathbf b^-$.
Then by the Taylor expansion we get that
\begin{align}
\begin{split}\label{eq:wb orders}
  \frac{d}{dt}w^{\ell,\pm}_{ij}(t) &=\eta\left (g^{\ell,\pm}_{ij}(\mathbf x_0, \mathbf w^+_0, \mathbf w^-_0, \mathbf b^+_0, \mathbf b^-_0)^\top+\eta \nabla g^{\ell,\pm}_{ij}\cdot (\mathbf x_1, \mathbf w^+_1, \mathbf w^-_1, \mathbf b^+_1, \mathbf b^-_1)^\top +O(\eta^2) \right )\\
  \frac{d}{dt}b^{\ell,\pm}_{i}(t) &=\eta\left (q^{\ell,\pm}_{i}(\mathbf x_0, \mathbf w^+_0, \mathbf w^-_0, \mathbf b^+_0, \mathbf b^-_0)^\top+\eta \nabla q^{\ell,\pm}_{i}\cdot (\mathbf x_1, \mathbf w^+_1, \mathbf w^-_1, \mathbf b^+_1, \mathbf b^-_1)^\top +O(\eta^2) \right )
      \end{split}
\end{align}
By matching the orders with respect to $\eta$, we have
\begin{align*}
    \frac{d}{dt}x^\ell_{i,0}(t)&=f^\ell_i(\mathbf x_0, \mathbf w^+_0, \mathbf w^-_0, \mathbf b^+_0, \mathbf b^-_0),\\
     \frac{d}{dt}w^{\ell,\pm}_{ij,0}(t)&=0, \text{ and}\\
      \frac{d}{dt}b^{\ell,\pm}_{i,0}(t)&=0,
\end{align*}
and hence the leading order term $x^\ell_{i,0}(t)$ follows the dynamics governed by $f^\ell_i$ while the leading order term $w^{\ell,\pm}_{ij,0}$ and $b^{\ell,\pm}_{i,0}$ are fixed at the initial condition. Furthermore, the convergence speed of the system $\frac{d}{dt}x^\ell_{i,0}(t)=f^\ell_i(\mathbf x_0, \mathbf w^+_0, \mathbf w^-_0, \mathbf b^+_0, \mathbf b^-_0)$ is exponential in time (Appendix \ref{app:conv speed}). Therefore $x^\ell_{i,0}(T)$ will closely approximate its steady state for a large $T$. After this convergence is approximately made, the dynamics of the system can be demonstrated with the evolution of the second leading terms of order $\eta$. Matching the order in \eqref{eq:wb orders}, we find that $w^{\ell,\pm}_{ij,1}$ and $b^{\ell,\pm}_{i,1}$ follow 
\begin{align*}
    &\frac{d}{dt}w^{\ell,\pm}_{ij,1}= g^{\ell,\mp}_{ij}(\mathbf x_0, \mathbf w^+_0, \mathbf w^-_0, \mathbf b^+_0, \mathbf b^-_0)\approx \frac{\partial \mathcal L}{\partial w^{\ell,\pm}_{ij}}(\mathbf x_0(\infty),\mathbf w^+_0(0), \mathbf w^-_0(0), \mathbf b^+_0(0), \mathbf b^-_0(0)),\\
    &\frac{d}{dt}b^{\ell,\pm}_{i,1}= q^{\ell,\mp}_{i}(\mathbf x_0, \mathbf w^+_0, \mathbf w^-_0, \mathbf b^+_0, \mathbf b^-_0)\approx \frac{\partial \mathcal L}{\partial b^{\ell,\pm}_i} (\mathbf x_0(\infty),\mathbf w^+_0(0), \mathbf w^-_0(0), \mathbf b^+_0(0), \mathbf b^-_0(0)),
\end{align*}
where $\mathbf x_0(\infty):=\lim_{t\to \infty}\mathbf x_0$. In this approximation, we used the exponentially quick convergence of $x^\ell_i$'s. 
Hence, we get the desired gradient descent as
\begin{align} \begin{split}\label{eq:update of w and b}
w^{\ell,\pm}_{ij}(T)&\approx w^{\ell,\pm}_{ij,0}(0)+\eta  T \frac{\partial \mathcal L}{\partial w^{\ell,\pm}_{ij}}(\mathbf x_0(\infty),\mathbf w^+_0(0), \mathbf w^-_0(0), \mathbf b^+_0(0), \mathbf b^-_0(0)),\\
      b^{\ell,\pm}_{i}&\approx b^{\ell,\pm}_{i,0}(0)+\eta  T \frac{\partial \mathcal L}{\partial b^{\ell,\pm}_i} (\mathbf x_0(\infty),\mathbf w^+_0(0), \mathbf w^-_0(0), \mathbf b^+_0(0), \mathbf b^-_0(0)).
      \end{split}
\end{align}
 This process is repeated with random changes of the inputs $\hat x$.  The convergence of the backward and the update network is schematically described in Figure \ref{fig:update and time evolution} A. We also provided schematic the convergence when noise of reaction rate constants exists in Figure \ref{fig:update and time evolution} B.
 The time evolutions of species in our CRN implementing a NN are displayed in Figures \ref{fig:update and time evolution} C.

    \subsection{An example of the the backward and the update network} \label{sec: full backward and update}

Here we provide a complete list of reactions in $(\Sp^b,\C^b,\Re^b,\K^b)$ and $(\Sp^u,\C^u,\Re^u,\K^u)$ to implement the backward network of a simple two-layered NN described in Figure \ref{fig: structure}.

In Table \ref{table:1}, w presents the CRN that calculates the derivative of $\mathcal L$ with respect to the output layer parameters. The concentration of a species is denoted by lowercase letters. Unless specified, the rate parameters are assumed to be $1$ and are omitted. 
\begin{table}[!h]
\begin{tabular}{|c|c|}
\hline
sub-CRNs to compute $\nabla_{w^1}\mathcal L$ and $\nabla_{b^1}\mathcal L$& Goal \\
\hline
\makecell[l]{
$\begin{aligned}
    &2Y\to 2Y+DY, \quad DY+A^2\to A^2\\
    &A^2\to \emptyset, \quad 2Y\to 2Y+A^2, \quad H\to H+A^2
\end{aligned}$} 
&$\dlim_{t\to \infty}dy(t)= \frac{y^2}{y^2+h}=\frac{\partial y}{\partial z^1_1}$ \\
\hline
$\begin{aligned}
    Y\xrightarrow{2}LY+Y, \quad \hat Y\xrightarrow{2}LY+\hat Y,\quad LY\to \emptyset
\end{aligned}$ 
& $\dlim_{t\to \infty}ly(t)=2(y-\hat y)=\frac{\partial \mathcal L}{\partial y}$ \\
\hline
$\begin{aligned}
    LZ^1_1\to \emptyset, \quad LY+DY\to LY+DY+LZ^0_1
\end{aligned}$ 
& $\dlim_{t\to \infty}lz^1_1(t)=2(y-\hat y)\frac{\partial y}{\partial z^1_1}=\frac{\partial  \mathcal L}{\partial z^1_1}$\\
\hline
$\begin{aligned}
   &LW^1_{11}\to \emptyset, \quad LZ^1_1+X^1_1\to LZ^1_1+X^1_1+LW^0_{11}\\
   &LW^1_{12}\to \emptyset, \quad LZ^1_1+X^1_2\to LZ^1_1+X^1_2+LW^0_{12}
\end{aligned}$ 
& $\begin{aligned}
    &\dlim_{t\to \infty}lw^1_{11}(t)=\frac{\partial \mathcal L}{\partial z^1_1}\frac{\partial z^1_1}{\partial w^1_{11}}=\frac{\partial \mathcal L}{\partial w^1_{11}}\\
     &\dlim_{t\to \infty}lw^1_{12}(t)=\frac{\partial \mathcal L}{\partial z^1_1}\frac{\partial z^1_1}{\partial w^1_{12}}=\frac{\partial \mathcal L}{\partial w^1_{12}}
\end{aligned}$\\
\hline
\end{tabular}
\caption{An example of a sub-CRN in $(\Sp^b,\C^b,\Re^b,\K^b)$ that computes $\nabla_{w^1}\mathcal L$ and $\nabla_{b^1}\mathcal L$.}\label{table:1}
\end{table}
Note that we do not need to compute $\frac{\partial \mathcal L}{\partial b^1_1}$ separately as it is equal to $\frac{\partial L}{\partial z^1_1}$.
The CRN that computes the derivative of $\mathcal L$ with respect to the parameters of the first layer is shown in Table \ref{table:2}. 
\begin{table}[!h]
\begin{tabular}{|c|c|}
\hline
sub-CRNs to compute $\nabla_{w^0}\mathcal L$ and $\nabla_{b^0}\mathcal L$ & Goal \\
\hline
$\begin{aligned}
    &2X^1_1\to 2X^1_1+DX^1_1, \quad DX^1_1+A^1_1\to A^1_1\\
    &A^1_1\to \emptyset, \quad 2X^1_1\to 2X^1_1+A^1_1, \quad H\to H+A^1_1
\end{aligned}$
&$\dlim_{t\to \infty}dx^1_1(t)= \frac{(x^1_1)^2}{(x^1_1)^2+h}=\frac{\partial x^1_1}{\partial z^0_1}$ \\
\hline
$\begin{aligned}
   LX^{1,\pm}_1 \to \emptyset, \quad LZ^1_1+W^{1,\pm}_{11}\to LZ^1_1+W^{1,\pm}_{11}+LX^{1,\pm}_1
\end{aligned}$ 
& $\dlim_{t\to \infty}(lx^{1,+}_1(t)-lx^{1,-}_1(t))=\frac{\partial \mathcal L}{\partial x^1_1}$ \\
\hline
$\begin{aligned}
    LZ^{0,\pm}_1\to \emptyset, \quad LX^{1,\pm}_1+DX^1_1\to LX^{1,\pm}_1+DX^1_1+LZ^{0,\pm}_1
\end{aligned}$ 
& $\dlim_{t\to \infty}(lz^{0,+}_1(t)-lz^{0,-}_1(t))=\frac{\partial \mathcal L}{\partial z^0_1}$\\
\hline
$\begin{aligned}
   LW^{0,\pm}_{11}\to \emptyset, \quad  LZ^{0,\pm}_1+X^0\to  LZ^{0,\pm}_1+X^0+ LW^{0,\pm}_{11}
\end{aligned}$ 
& $\dlim_{t\to \infty}(lw^{0,+}_{11}(t)-lw^{0,-}_{11}(t))=\frac{\partial \mathcal L}{\partial w^0_{11}}$\\
\hline
$\begin{aligned}
    &2X^1_2\to 2X^1_2+DX^1_2, \quad DX^1_2+A^1_2\to A^1_2\\
    &A^1_2\to \emptyset, \quad 2X^1_2\to 2X^1_2+A^1_2, \quad H\to H+A^1_2
\end{aligned}$
&$\dlim_{t\to \infty}dx^1_2(t)=\frac{(x^1_2)^2}{(x^1_2)^2+h}=\frac{\partial x^1_2}{\partial z^0_2}$ \\
\hline
$\begin{aligned}
   LX^{1,\pm}_2 \to \emptyset, \quad LZ^1_1+W^{1,\pm}_{12}\to LZ^1_1+W^{1,\pm}_{12}+LX^{1,\pm}_2
\end{aligned}$ 
& $\dlim_{t\to \infty}(lx^{1,+}_2(t)-lx^{1,-}_2(t))=\frac{\partial \mathcal L}{\partial x^1_2}$ \\
\hline
$\begin{aligned}
    LZ^{0,\pm}_2\to \emptyset, \quad LX^{1,\pm}_2+DX^1_2\to LX^{1,\pm}_2+DX^1_2+LZ^{0,\pm}_2
\end{aligned}$ 
& $\dlim_{t\to \infty}(lz^{0,+}_2(t)-lz^{0,-}_2(t))=\frac{\partial \mathcal L}{\partial z^0_2}$\\
\hline
$\begin{aligned}
   LW^{0,\pm}_{21}\to \emptyset, \quad  LZ^{0,\pm}_2+X^0\to  LZ^{0,\pm}_2+X^0+ LW^{0,\pm}_{21}
\end{aligned}$ 
& $\dlim_{t\to \infty}(lw^{0,+}_{21}(t)-lw^{0,-}_{21}(t))=\frac{\partial \mathcal L}{\partial w^0_{21}}$\\
\hline
\end{tabular}
\caption{An example of a sub-CRN in $(\Sp^b,\C^b,\Re^b,\K^b)$ that computes $\nabla_{w^0}\mathcal L$ and $\nabla_{b^0}\mathcal L$.}\label{table:2}
\end{table}
Similarly, we do not need to compute $\frac{\partial \mathcal L}{\partial b^{0}_i}$ separately as it is equal to $\frac{\partial \mathcal L}{\partial z^{0}_i}$ for each $i=1,2$.
The overall architecture of the backward network is illustrated in Figure \ref{fig: structure}. 
Finally, we provide reactions in the update CRN in Table \ref{table:3}.
\begin{table}
\begin{tabular}{|c|c|}
\hline
Update network & Goal\\
\hline
\makecell[l]{
$\begin{aligned}
   \hspace{3cm} LW^{1,\mp}_{11} \xrightarrow{\eta} LW^{1,\mp}_{11} + W^{1,\pm}_{11}  \hspace{3cm}
\end{aligned}$} 
&$w^1_{11} \mapsfrom w^1_{11}-\eta T\frac{\partial \mathcal L}{\partial w^1_{11}}$ \\
\hline
$\begin{aligned}
   LW^{1,\mp}_{12} \xrightarrow{\eta} LW^{1,\mp}_{12} + W^{1,\pm}_{12}
\end{aligned}$
& $w^1_{12} \mapsfrom w^1_{12}-\eta T\frac{\partial \mathcal L}{\partial w^1_{12}}$\\
\hline
$\begin{aligned}
    LW^{0,\mp}_{11} \xrightarrow{\eta} LW^{0,\mp}_{11} + W^{0,\pm}_{11}
\end{aligned}$ 
& $w^0_{11} \mapsfrom w^0_{11}-\eta T\frac{\partial \mathcal L}{\partial w^0_{11}}$\\
\hline
$\begin{aligned}
    LW^{0,\mp}_{21} \xrightarrow{\eta} LW^{0,\mp}_{21} + W^{0,\pm}_{21}
\end{aligned}$ 
& $w^0_{21} \mapsfrom w^0_{21}-\eta T\frac{\partial \mathcal L}{\partial w^0_{21}}$\\
\hline
\end{tabular}
\caption{An example of $(\Sp^u,\C^u,\Re^u,\K^u)$ }\label{table:3}
\end{table}
The construction of the forward network follows the previous work \cite{anderson2021reaction}.

\section{Applications to data sets}\label{sec:Xor, Iris}
In this section, we applied the complete CRN, which includes the forward, backward, and update CRNs, to implement neural networks on various datasets. The forward network used is the same as in \cite{anderson2021reaction}, while the backward and update networks are constructed as proposed in Section \ref{sec: backward network} and Section \ref{sec:update}. 
The datasets used include XOR, Iris, and MNIST, which are typical examples of classification problems in machine learning. More detailed descriptions of the datasets are provided in Appendix \ref{app:data}.
For these datasets, we utilized a two-layered neural network consisting of the input layer, a single hidden layer, and the output layer, implemented with the proposed CRNs. We regard running our CRN over an interval $[(k-1)T,kT)$ for each $k$ as a single iteration. After each iteration, we randomly changed the input $\hat x$. We let the CRN run for $[0,KT)$, where $K$ means the total number of iterations.

We used an error rate function to measure the performance of our CRN. Here, the error rate $E(t)$ is the average of the mean squared error (MSE) across the entire dataset at time $t$. More precisely, 
\begin{align}\label{eq:error}
    E(t)= \frac{1}{N}\sum_{n=1}^N \Vert y_n(t)-\hat y_n\Vert_{L_2}^2,
\end{align}
where $y_n(t)=(y_{n,1}(t),\dots,y_{n,M^L}(t))$ is the vector of the concentration of the species corresponding to the output layer with $n$ th input data. Importantly when evaluating $y_n(t)$, we used the weights $w^\ell_{ij}(t)$'s and biases $b^\ell_{i}(t)$ to measure the evolution of the error. $\hat y_n$ is the training data for the $n$ th input data.
Here, $N$ is the number of total data sets.
That is, $E(t)$ measures the average error of the neural network at time $t$ using the weights and biases that have been updated through the backward and the updated CRNs up to time $t$.

%Our experimentation takes two distinct approaches for handling this dataset. One approach involved the utilization of stochastic switches, as elucidated in section 3.3 of the paper, while the other approach employed fixed transition times between the forward, backward, and update networks.
With four XOR data points, we trained the forward CRN using the backward and update CRNs proposed in Section \ref{sec: backward network} and \ref{sec:update}. 
We plotted the time evolution of the error and the accuracy of the NN implemented by the CRN (Figure \ref{fig:application to data} A left).  
The graphs of the fitted functions are displayed in  Figure \ref{fig:application to data} A right.
For the Iris data sets, we plotted the time evolutions of the error functions and the accuracy (Figure \ref{fig:application to data} B top) and also provided the pair plots (Figure \ref{fig:application to data} B bottom).

For MNIST data sets, each input data point is a $8\times 8$ pixel representing an image of a handwritten digit. In addition to the error and accuracy functions ((Figure \ref{fig:application to data} C top), we also displayed the classification of the input data points that are embedded in a plane through principal component analysis (PCA) (Figure \ref{fig:application to data} C bottom). The classification by our CRN is nearly the same as the classification by a traditional NN. Note that for the Iris and MNIST data sets, we splitted the original data sets into training and validation data sets. When $\hat y_n$ comes from the training (validation) data sets, we call $E(t)$ the training (validation) error.

For each simulation, all reaction rate constants within the forward and backward CRNs were assigned a value of $1$, with the exception of those related to division, such as $\kappa_1$ and $\kappa_2$ in \eqref{eq:crn for div and add}. These division-related rate constants were deliberately set to be large to enhance the speed of the reactions, as they tend to occur at a slow rate. Conversely, a small reaction rate constant $\eta$ was utilized for the update CRN.

See Appendix \ref{app:data} for the details of these experiments and the data sets.
\begin{figure}
    \centering \includegraphics[width=1\linewidth]{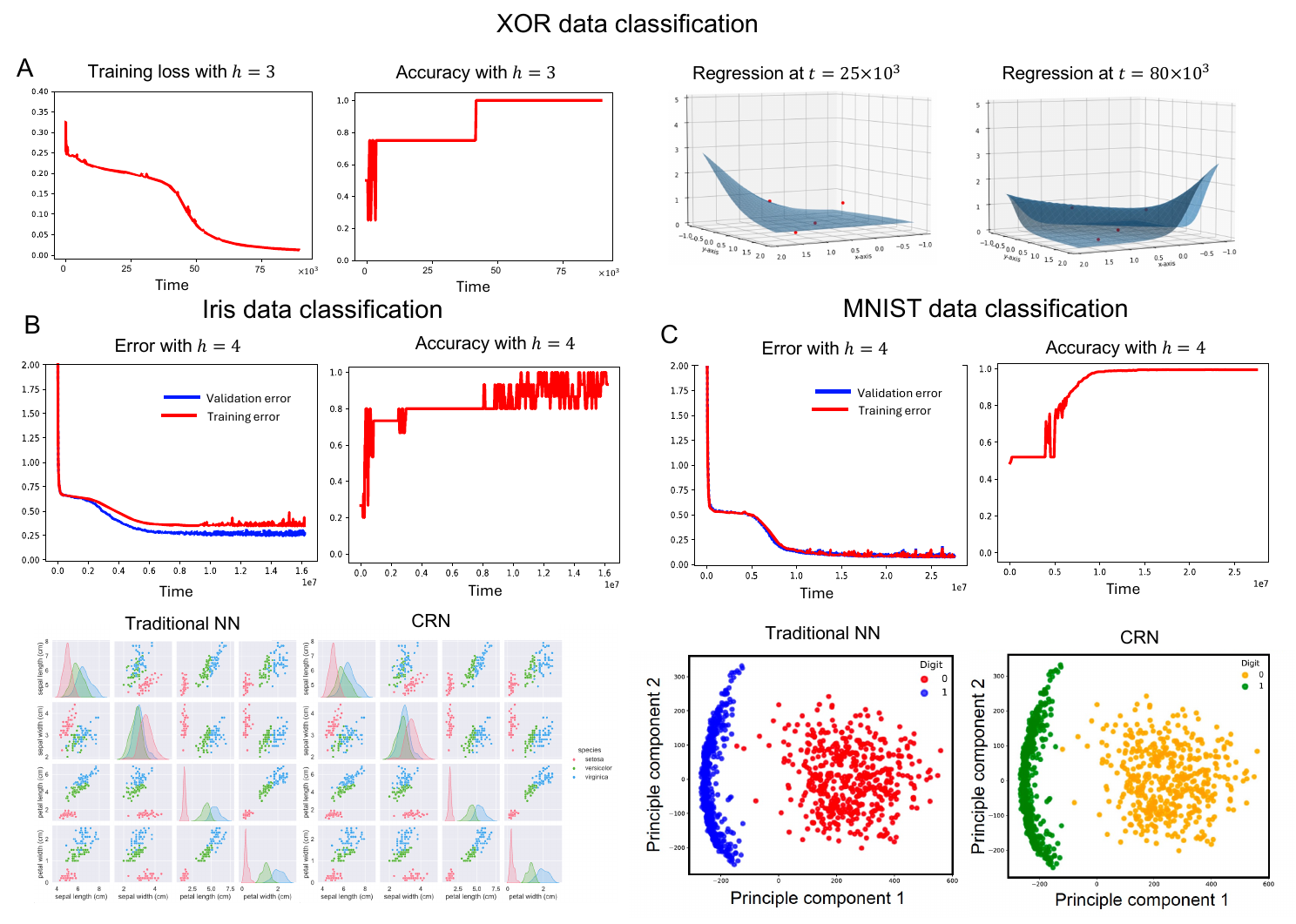}
    \caption{The time evolutions of the error and the accuracy functions for the XOR (A left), the Iris (B top), and the MNIST (C top) data sets. A right. The regression of the XOR data sets after 100 and 10000 iterations. B bottom. The pair plots for the Iris data sets (See Appnedix \ref{app:iris} C bottom. Classification of the MNIST input data sets by the outputs obtained with a standard NN and our CRN. 
    The input data points are embedded on a 2d plane via the principal component analysis. See Table \ref{table:parameters} in Supplementary material for the parameters used for these simulations. }
    \label{fig:application to data}
\end{figure}

% Here, the error rate $E(t)$ is the average of the mean squared error (MSE) loss values over the entire dataset at time $t$, and the accuracy is defined as the number of outputs that correctly classified the data divided by the number of the entire data sets at time $t$. More precisely, 
% \begin{align}
%     E(t)= \frac{1}{N}\sum_{n=1}^N \Vert y^n(t)-\hat y\Vert_{L_2}^2
% \end{align}

\section{Robustness to noisy reaction rates}\label{sec:robust}
In reality, the reaction timing is affected by the fluctuation of temperature, pressure, and other external disturbances. This means that the rates of the reactions are inherently random. Due to intrinsic noise of the system, the concentrations of species can also fluctuate around the exact values modeled by the differential equations.   %Furthermore, the amounts of the input species may not be 100\% accurately inserted to the chemical reaction system. 

These random fluctuations in either the reaction rate parameters or the concentration of the species will subsequently cause noise in the gradient of the loss function and eventually make the gradient descent process noisy. Recent studies have examined the effect of noise in stochastic gradient descent processes. Surprisingly, noise can regularize the optimization process and help find flatter local minima  \cite{zhu2018anisotropic}. However, a large variance of noise in the gradient descent processes can significantly distort optimization and slow down the convergence speed. Thus noise reduction methods have been developed in previous studies \cite{huang2021stochastic} where the variance of Gaussian noise is set to be a decreasing function of either time or the number of parameters. 
To robustly implement NNs against noisy parameters and species via the CRNs, therefore, it is critical to minimize fluctuations in the gradient descent process.

In this section, we proved that the smoothness of the activation function ensures that the size of the fluctuation in our CRN remains at most of order $\epsilon$ when both the extrinsic and the intrinsic noise are of order $\epsilon$. 

\subsection{Noisy gradient descent processes induced by fluctuations in the CRN}\label{sec:noise analysis}
We will begin by mathematically formulating the impact of noise from both the forward and backward CRNs on the update CRN. Since the input of the neural network is updated at $t=kT$ for each $k=1,2,\dots$ for some $T$, we will analyze the noise effect on $t\in[0,T]$.

We define the noisy rate parameters as follows. Let $\kappa_j(t;\epsilon)$ represent the noisy rate parameter of the $j$ th reaction. We assume that $\kappa_j(t;\epsilon)=|\kappa_j +\epsilon \xi_j(t)|$, where $\epsilon$ is the noise magnitude. The noise terms $\xi_j(t)$ follow independent standard normal distributions $\mathcal N(0,1^2)$. For some $t_0$ and $t_1$, they are re-sampled at each $t=k t_0$ and remain constant for $t_1$ for each $k=1,2,\dots$ (see Figure \ref{fig:activation_noise} A). That is, for each $j$,
\begin{align}\label{eq:normal}
\xi_j(t)=\begin{cases}Z_{j,k} \overset{\text{i.i.d.}}{\sim} \mathcal N(0,1^2) \quad &\text{if $t\in [kt_0, k t_0+t_1)$ for some $k$.}\\
    0 &\text{otherwise}, \end{cases}
\end{align}
where $\kappa_j$'s are the exact noise-free rate parameters. As mentioned before we use identical rate parameters to compute desired outcomes so that we set $\kappa_j=1$ for each $j$.
Let $(\Sp,\C,\Re, \mathcal K^\epsilon)$ represent the union of the forward CRN (employed from \cite{anderson2021reaction}) and the backward CRN that implements a NN with the smooth ReLU activation function \eqref{eq:smooth relu}. Here $\mathcal K^\epsilon$ is the set of the noisy rate parameters $\kappa_j(t;\epsilon)$'s. Let $(\Sp,\C,\Re,\mathcal K)$ be the same CRN as $(\Sp,\C,\Re, \mathcal K^\epsilon)$ but with noise-free parameters, where the reaction rate parameters are all equal to $1$ in $\mathcal K$.

Now we investigate the variation of the concentrations of the species by the noisy rate parameters.
Let $x^\ell_i(t;\epsilon)$ and $x^\ell_i(t)$ be the concentration of species $X^\ell_i$ of $(\Sp,\C,\Re,\mathcal K^\epsilon)$ and $(\Sp,\C,\Re,\mathcal K)$, respectively, that are corresponding to the $i$ th nodes on the $\ell$ th layer. Due to the structure of the forward network proposed in \cite{anderson2021reaction}, the governing equation of $x^\ell_j(t;\epsilon)$ is written as
\begin{align}
    \frac{d}{dt}x^\ell_i(t;\epsilon)&=h+x^\ell_i(t;\epsilon) \sum_{j}\rho^{\ell-1}_{j,\epsilon}(x^{\ell-1}_j(t;\epsilon))-(1+\epsilon \xi_k(t))(x^\ell_i(t;\epsilon))^2, \label{eq:forward eq}\\
     \frac{d}{dt}x^\ell_i(t)&=h+x^\ell_i(t) \sum_{j}\rho^{\ell-1}_{j,0}(x^{\ell-1}_j(t))-(x^\ell_i(t))^2, \label{eq:forward eq2}
\end{align}
where
\begin{align*}
\rho^{\ell}_{i,\epsilon}(z)=\sum_{j=1}^{M^\ell}\left((1+\epsilon \xi^{\ell,+}_{ij})w^{\ell,+}_{ij} -(1+\epsilon \xi^{\ell,-}_{ij})w^{\ell,-}_{ij} \right)z + (1+\epsilon \xi^{\ell,-}_i) b^{\ell,+}_i-(1+\epsilon \xi^{\ell,-}_i) b^{\ell,-}_i,
\end{align*}
and $\xi^{\ell,\pm}_{ij}$, $\xi^{\ell,\pm}_{i}$ and $\xi^\ell_i$ are independent standard normals defined in \eqref{eq:normal} 
(here we suitably re-enumerated $\eta_j$ in \eqref{eq:normal}). As shown in Section \ref{sec:one pot}, due to the difference of the time scale between the dynamics of $x^\ell_i$'s and the parameters $w^\ell_{ij}$'s and $b^\ell_i$'s, we  assume that $w^\ell_{ij}$'s and $b^\ell_i$'s are fixed for $t\in [0,T]$. %Then using Gronwall's inequality \cite{anderson2021reaction} to \eqref{eq:forward eq} and \eqref{eq:forward eq2}, we have that $\sup_{\ell,i}\sup_{t\in[0,T]} (x^\ell_i(t;\epsilon)+x^\ell_i(t))=R<\infty$ for each $\ell$.  
%Then one can expect that the deviation $|x^\ell_i(t;\epsilon)-x^\ell_i(t)|=O(\epsilon)$ is obtained from the difference between \eqref{eq:forward eq} and \eqref{eq:forward eq2}. 
Using the feedback-free structure considered in Remark \ref{rem:feedback free}, we can inductively prove the following proposition for the variation of the species $x^\ell_i(t;\epsilon)$'s from the noise. See Appendix \ref{app:proof} for the proof.
\begin{prop}\label{prop:deviation of forward}
    Under the above setting, there exists a finite random variable $\Theta$ such that for each $\ell$ and for $i$, $\sup_{t\in [0,T]}|x^\ell_i(t;\epsilon)-x^\ell_i(t)| \le \epsilon \Theta$,
where 
\begin{align*}
    \Theta=\Theta(L, \max_{\ell}M^\ell,h,\max_{\ell,i,j}w^{\ell,\pm}_{ij},\max_{\ell,i}b^{\ell,\pm}_i,\sup_{\ell,i,j}\sup_{t\in ]0,T]}(|\xi^{\ell,\pm}_{ij}|+|\xi^{\ell,\pm}_{i}|+|\xi^\ell_{i}|)).
\end{align*}
\end{prop}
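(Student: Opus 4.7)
The plan is to proceed by induction on the layer index $\ell$, exploiting the feedback-free structure noted in Remark \ref{rem:feedback free}: the dynamics of $x^\ell_i$ depend only on $x^{\ell-1}_j$'s (through $\rho^{\ell-1}_{j,\epsilon}$) and on $x^\ell_i$ itself, never on species from higher layers. This unidirectional dependence is what lets the $O(\epsilon)$ deviations propagate additively through the network without compounding into something worse than linear in $\epsilon$.

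For the base case, the species on the input layer are set directly from the input $\hat x$, so they are identical in the noisy and noise-free systems, and the claim holds trivially with $\Theta^1=0$. For the inductive step, I would fix $\ell\ge 2$, assume $\sup_{t\in[0,T]}|x^{\ell-1}_j(t;\epsilon)-x^{\ell-1}_j(t)|\le \epsilon\Theta^{\ell-1}$ for every $j$, and define the error $e^\ell_i(t):=x^\ell_i(t;\epsilon)-x^\ell_i(t)$. Subtracting \eqref{eq:forward eq2} from \eqref{eq:forward eq} and adding/subtracting convenient cross terms, $\tfrac{d}{dt}e^\ell_i(t)$ can be rewritten as
\begin{equation*}
 -\bigl(x^\ell_i(t;\epsilon)+x^\ell_i(t)\bigr)\,e^\ell_i(t) + x^\ell_i(t;\epsilon)\sum_j \bigl[\rho^{\ell-1}_{j,\epsilon}(x^{\ell-1}_j(t;\epsilon))-\rho^{\ell-1}_{j,0}(x^{\ell-1}_j(t))\bigr] - \epsilon \xi_k(t)\bigl(x^\ell_i(t;\epsilon)\bigr)^2.
\end{equation*}
The first term is a dissipative contraction (its coefficient is nonnegative since concentrations are nonnegative). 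The bracketed difference splits into two pieces: one of the form $(1+\epsilon\xi)\bigl(x^{\ell-1}_j(t;\epsilon)-x^{\ell-1}_j(t)\bigr)$ controlled by the inductive hypothesis, and an $\epsilon$-linear piece coming from the explicit noise in the weights and biases.

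Before invoking Grönwall, I need an a priori $L^\infty$ bound on $x^\ell_i(t;\epsilon)$ uniform in $\epsilon$ (for $\epsilon$ small enough). This follows from the smooth-ReLU structure: the ODE \eqref{eq:forward eq} is of the form $\dot y = h + yz - (1+\epsilon\xi)y^2$, so wherever $y$ exceeds the positive root of $(1+\epsilon\xi)y^2 - yz - h = 0$ the derivative is negative. This yields a bound depending on $\sup_{\ell,i,j}w^{\ell,\pm}_{ij}$, $\sup_{\ell,i}b^{\ell,\pm}_i$, the maximum layer width, and the realized noise magnitudes, i.e.\ exactly the quantities listed in the statement of $\Theta$. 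Once the solution is trapped in a bounded set, the ODE for $e^\ell_i$ reads $\tfrac{d}{dt}e^\ell_i(t) = -\alpha(t)e^\ell_i(t) + G^\ell_i(t;\epsilon)$ with $\alpha(t)\ge 0$ and $|G^\ell_i(t;\epsilon)|\le \epsilon\, C^\ell$ for a finite $C^\ell$ that is a polynomial in the bound just described, the network widths, and the noise samples. Grönwall's inequality (or the variation-of-constants formula together with $e^\ell_i(0)=0$) then gives $\sup_{t\in[0,T]}|e^\ell_i(t)|\le \epsilon\, C^\ell T =:\epsilon\,\Theta^\ell$, and taking $\Theta$ to be the maximum of the finitely many $\Theta^\ell$'s closes the induction.

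The main obstacle I expect is the uniform a priori bound on $x^\ell_i(t;\epsilon)$. Because the noise $\xi_k(t)$ is Gaussian and unbounded, the coefficient $1+\epsilon\xi_k(t)$ can, with positive probability, be arbitrarily close to zero on some sub-interval, which weakens the quadratic dissipation that keeps $x^\ell_i$ bounded. This is precisely why the statement allows $\Theta$ to be a random variable depending on $\sup_{t\in[0,T]}|\xi_k(t)|$ rather than a deterministic constant: on the event $\{\sup|\xi|<\infty\}$, which has probability one because the noise is piecewise constant with finitely many pieces on $[0,T]$ by \eqref{eq:normal}, one obtains the trap and the Grönwall bound, and the constants absorbed into $\Theta$ depend on the realized supremum of the noise. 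Once this random a priori bound is in place, the rest of the argument is a routine comparison of ODEs.
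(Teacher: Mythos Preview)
Your overall strategy coincides with the paper's: induction on the layer index via the feedback-free structure, an a priori trapping bound on the noisy trajectories, and a Gr\"onwall-type estimate on the layerwise error. The paper records the a priori bounds (both the upper bound $R$ and a strictly positive lower bound $r$ on $x^\ell_i(t;\epsilon)+x^\ell_i(t)$, the latter coming from the constant source term $h$ in \eqref{eq:forward eq}) as a separate lemma, and then runs Gr\"onwall on the squared error rather than on $e^\ell_i$ directly; otherwise the two arguments are parallel.

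One algebraic slip is worth flagging. In your displayed expression for $\tfrac{d}{dt}e^\ell_i$ you have lost a term: writing $a=x^\ell_i(t;\epsilon)$, $b=x^\ell_i(t)$, $P=\sum_j\rho^{\ell-1}_{j,\epsilon}\bigl(x^{\ell-1}_j(t;\epsilon)\bigr)$ and $Q=\sum_j\rho^{\ell-1}_{j,0}\bigl(x^{\ell-1}_j(t)\bigr)$, the drift contributes $aP-bQ=a(P-Q)+e^\ell_i\,Q$, not $a(P-Q)$ alone. The extra $e^\ell_i\,Q$ is linear in $e^\ell_i$ with an $O(1)$ coefficient and must be absorbed into your $\alpha(t)$, giving $\alpha(t)=a+b-Q$; this is bounded on $[0,T]$ by the a priori estimate but is not obviously nonnegative, so the clean variation-of-constants bound $|e^\ell_i|\le\epsilon C^\ell T$ does not follow as written. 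The repair is immediate---use Gr\"onwall with a merely bounded linear coefficient and absorb the resulting $e^{MT}$ factor into $\Theta^\ell$, which is harmless since $T$ is fixed---so the argument still closes after this cosmetic correction.
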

Using the result of Proposition \ref{prop:deviation of forward}, we can similarly obtain the deviation of the backward CRN. Let $LW^{\ell,\pm}_{ij}$ and let $LB^{\ell,\pm}_{i}$ be the species  corresponding to $\frac{\partial \mathcal L}{\partial w^{\ell}_{ij}}$ and $\frac{\partial \mathcal L}{\partial b^{\ell}_{i}}$ in the dual rail manner, respectively. We denote their concentrations by $lw^{\ell,\pm}_{ij}(t;\epsilon)$ and $lb^{\ell,\pm}_i(t;\epsilon)$ for $(\Sp,\C,\Re,\mathcal K^\epsilon)$, and we also denote their concentrations by $lw^{\ell,\pm}_{ij}(t)$ and $lb^{\ell,\pm}_i(t)$ for $(\Sp,\C,\Re,\mathcal K)$.
\begin{prop}\label{prop:devevation of backward}
    Under the same setting for Proposition \ref{prop:deviation of forward}, there exists a finite random variable $\bar \Theta$ such that for each combination of indices $\ell,i,j$,
\begin{align*}
   \sup_{t\in [0,T]}(|lw^\ell_{ij}(t;\epsilon)-lw^\ell_{ij}(t)|+|lb^\ell_i(t;\epsilon)-lb^\ell_i(t)|) \le \epsilon \bar \Theta,
\end{align*}  
where  $\bar \Theta=\bar \Theta(\Theta, L, \max_{\ell}M^\ell,h,\max_{\ell,i,j}w^{\ell,\pm}_{ij},\max_{\ell,i}b^{\ell,\pm}_i)$.
\end{prop}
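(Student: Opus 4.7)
The plan is to mirror the argument for Proposition~\ref{prop:deviation of forward}, now propagating the $O(\epsilon)$ deviation bound obtained there through the backward CRN by induction. The feedback-free architecture pointed out in Remark~\ref{rem:feedback free} is essential: the catalytic dependencies among the species $LY, DX^\ell_i, LZ^{\ell,\pm}_i, LX^{\ell,\pm}_i, LW^{\ell,\pm}_{ij}, LB^{\ell,\pm}_i$ form a DAG rooted at layer $L$ and flowing toward layer $0$, so there is a well-defined topological order in which to carry out the induction, and at each step the species being estimated depends only on species already controlled.

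The first substantive step I would handle is the derivative species $DX^\ell_i$, because it is the only place where division appears. Its companion $A^\ell_i$ satisfies an ODE whose constant source is (a noisy version of) $h$, so $a^\ell_i(t;\epsilon)$ is uniformly bounded below on $[0,T]$ by a positive random quantity close to $h$; this uniform lower bound is precisely what smoothness of $\sigma$ buys us, and it removes the only potential singularity. Combining this lower bound with the $O(\epsilon)$ control of $x^\ell_i$ from Proposition~\ref{prop:deviation of forward}, the ODE for the deviation $\delta(t) := dx^\ell_i(t;\epsilon) - dx^\ell_i(t)$ has the form
\begin{align*}
\dot\delta(t) = -(1+\epsilon\xi(t))\,\delta(t) + \epsilon R(t),
\end{align*}
with $|R(t)|$ bounded by a finite random variable depending on $\Theta, h$, and the noise suprema, so Gronwall's inequality yields $\sup_{t\in[0,T]} |dx^\ell_i(t;\epsilon) - dx^\ell_i(t)| \le \epsilon C_1$.

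For the remaining species I would perform the induction down the DAG. The ODEs for $LZ^{\ell,\pm}_i, LX^{\ell,\pm}_i, LW^{\ell,\pm}_{ij}, LB^{\ell,\pm}_i$ all have the same structural shape: a first-order linear decay $-(1+\epsilon\xi)\cdot(\text{self})$ plus a bilinear catalytic production $(1+\epsilon\xi)\cdot(\text{input}_1)\cdot(\text{input}_2)$, where the two inputs are species already bounded (either forward species, weights/biases, or previously treated backward species). Writing the deviation ODE exactly as in the display above and invoking Gronwall on $[0,T]$ at each node of the DAG produces an $O(\epsilon)$ bound with a finite multiplicative constant. Taking the maximum of these constants over the finite list of species in $(\Sp^b,\C^b,\Re^b,\K^b)$ yields $\bar\Theta$; finiteness is automatic because, on the compact interval $[0,T]$, only finitely many i.i.d.\ standard normals are sampled, so their suprema are almost surely finite.

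The main obstacle will be bookkeeping rather than analysis: one must verify the self-decay coefficient keeps Gronwall estimates uniform in $\epsilon$, and must track the dependence of $\bar\Theta$ on $\Theta$ and on the various $w^{\ell,\pm}_{ij}, b^{\ell,\pm}_i$ that appear as coefficients in the bilinear production terms. The absolute-value convention $\kappa_j(t;\epsilon)=|\kappa_j+\epsilon\xi_j(t)|$ from Section~\ref{sec:noise analysis} keeps all rates non-negative, and the constant source $h$ in the $A^\ell_i$ equation keeps the division non-singular, so no genuinely new analytic idea is needed beyond what went into Proposition~\ref{prop:deviation of forward}.
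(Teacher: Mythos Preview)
Your proposal is correct and follows essentially the same approach as the paper: both arguments propagate the $O(\epsilon)$ bound from Proposition~\ref{prop:deviation of forward} through the feedback-free backward CRN by induction, with the smoothness of $\sigma'(z)=\sigma(z)^2/(\sigma(z)^2+h)$ (equivalently, the positive lower bound on the $A^\ell_i$ species furnished by the source $h$) being the key ingredient that keeps the derivative species well-controlled. Your write-up is in fact considerably more detailed than the paper's, which gives only a short paragraph sketching the recursion and invoking smoothness; your explicit identification of the DAG order and the deviation-ODE/Gronwall template at each node is exactly the bookkeeping the paper leaves implicit.
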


%As highlighted in Remark \ref{rem:feedback free}, the feedback-free structure of the forward and backward network allows the inductive derivation for Proposition \ref{prop:deviation of forward} and Proposition \ref{prop:devevation of backward}.

% The deviation of order $\epsilon$ is maintain in the gradient descent process if the activation function is smooth (in fact it is sufficient that the derivative activation function is Lipschitz). We also denote by $\mathcal L_\epsilon(t)=|y(t;\epsilon)-\hat y|$ the loss function of the CRN with the noisy rate constants. similarly we denote $\mathcal L_\epsilon(t)=|y(t)-\hat y|$.
% \begin{prop}\label{prop:deviation of gradient}
%     Under the same setting as Proposition \ref{prop:deviation of forward}, for a fixed $T$, there exists $K$ such that for any $\ell$ and $i$, if $t\in [0, T]$ and $\kappa=\frac{1}{T}$ in \eqref{eq:update network}, then 
%     \begin{align}\label{eq: noise goal}
%     \sup_{t\in [0,T]}|\nabla_{w^\ell}\mathcal L_\epsilon- \nabla_{w^\ell}\mathcal L| \le \epsilon K,
%     \end{align}
%     where $K=K(\Theta, L, \max_{\ell}M^\ell,h,\max_{\ell,i,j}w^{\ell,\pm}_{ij},\max_{\ell,i}b^{\ell,\pm}_i)$.
% \end{prop}

% \begin{rem}
%     Although we assumed fixed $w^{\ell,\pm}_{ij}$'s and $b^{\ell,\pm}_i$'s, they are indeed being slowly updated as shown in \eqref{eq:update of w and b}.  
% \end{rem}

\begin{figure}
    \centering
\includegraphics[width=1\linewidth]{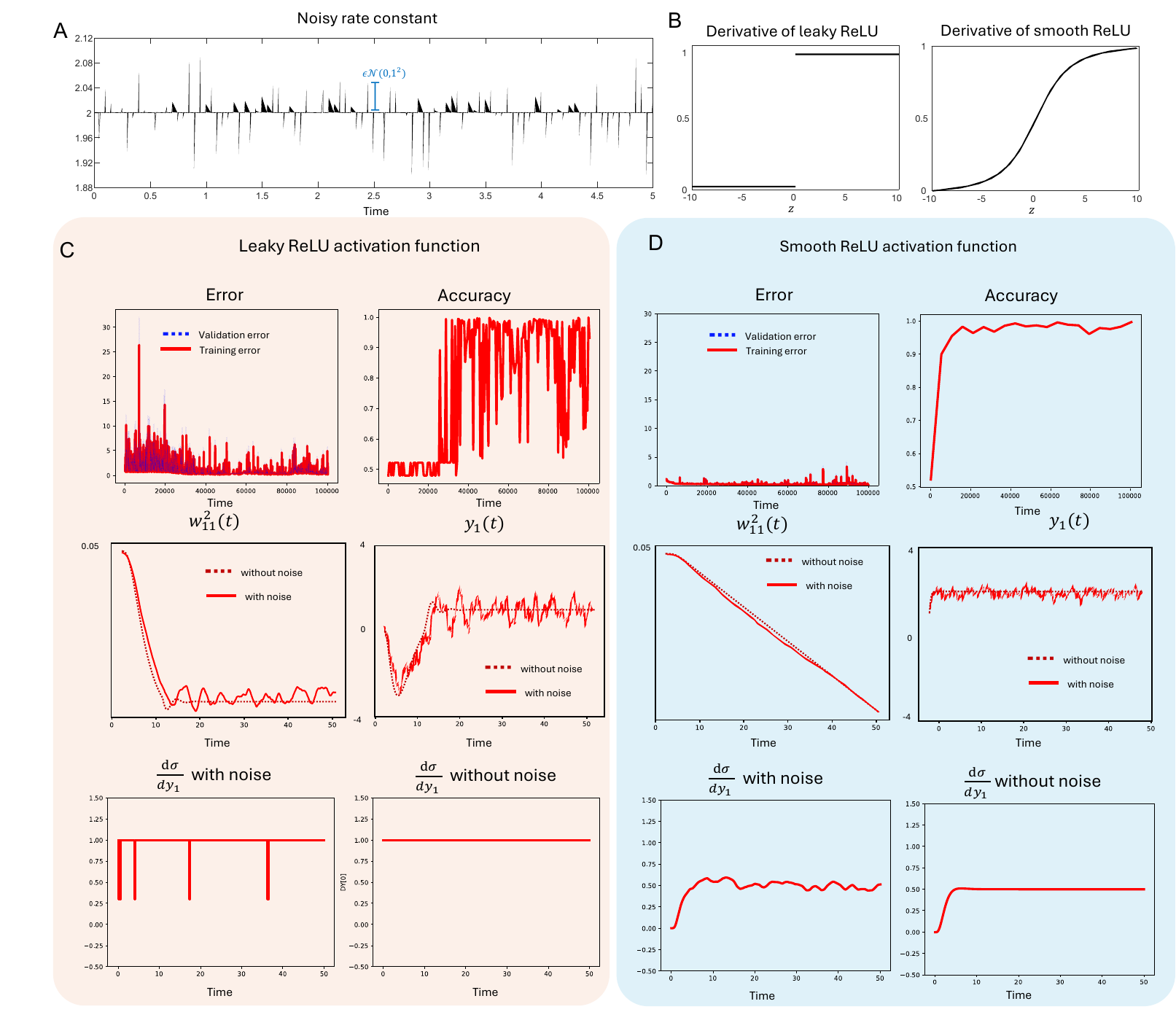}
    \caption{A. Time evolution of the noisy rate constant. B. The derivatives of the activation functions. C. The time evolution of the training and the validation error functions and the accuracy functions obtained by a CRN for a NN with the leaky ReLU (the first and second panels) and a CRN for a NN with the smooth ReLU (the third and fourth panels). D. The time evolutions of the weight and the output obtained by a CRN for a NN with the leaky ReLU (the first and second panels) and a CRN for a NN with the smooth ReLU (the third and fourth panels). E. The derivatives of the activation functions with noise and without noise. The first two panels are derivatives of the leaky ReLU and the others are the derivatives of the smooth ReLU. See Table \ref{table:parameters} in Supplementary material for the parameters used for these simulations.}
\label{fig:activation_noise}
\end{figure}

\subsection{Comparison to CRNs implementing NNs with non-smooth activation functions}

%As a similar effort, we showed that the restriction of the rate parameters can be slightly relaxed in Appendix. 

In this section, we demonstrated that usage of smoothed activation functions can make CRNs implementing NNs more robust to noise compared to CRNs implementing NNs with non-smooth activation functions such as the leaky ReLU. Specifically, the gradient of a non-smooth activation function has discrete jumps, which can lead to significant errors.

We built two CRNs implementing NNs whose activation functions are the leaky ReLU and the smooth ReLU with $h=4$, respectively. To build a NN equipped with the leaky ReLU, we also used the CRNs for the arithematic operations proposed in \cite{buisman2009computing}. In contrast to the approach in \cite{lakin2023design}, where clock proteins were used to achieve the leaky ReLU function, we opted for using built-in characteristic functions for simplicity. This means that the leaky ReLU is computed without involving chemical species in this paper.

For this comparison, we used MNIST data sets.  
The rate constants in backward CRNs fluctuate with the independent noise described in \eqref{eq:normal} (Figure \ref{fig:activation_noise} A). We chose $t_0=0.05$ and $t_1=0.005$. To realize intrinsic noise, we also added the same independent noise to the nodes in the hidden and the output layers on each interval $[kt_0,kt_0+t_1)$ for $k=0,1,\dots,$.

Figure \ref{fig:activation_noise} C and D displayed the fluctuation of the error and the accuracy functions. Due to noise, a long simulation may end up blow-up. Hence, to avoid blow-up, we ran both CRNs until time $50 \times 2000$, which is equivalent to $2000$ iterations with randomly changed inputs. We set the noise size $\epsilon=0.01$. The training error, validation error and accuracy displayed higher fluctuations in the NN with the leaky ReLU.  Furthermore, we compare the time evolution of the species $w^2_{11}(t)$ and $y_1(t)$ for the two CRNs under the noisy rate constants to those without noise (Figure \ref{fig:activation_noise} C and D middle). We use a single input and run both CRNs for $[0,50)$, which is an enough time for convergence for both CRNs. The time trajectories of those species in the CRN implementing the leaky ReLU fluctuate more than the trajectories of the CRN implementing the smooth ReLU. For these plots, we used a noise size $\epsilon=0.1$ to clearly visualize the differences in dynamics.

In Figure \ref{figs1}, we provided additional comparisons of the time evolutions of the error to support the noise-robustness of our CRN.

%Similarly, we tried to capture the effect of noise on the chemical species which performs as input of succeeding neural network layer. In this case, we used the $sin(x)$ function over $\left[0,3\right]$ as our experimental dataset. This noise was also modeled as gaussian noise with mean $0$ and standard deviation $0.05$. In other words, noise on the chemical species was simulated as $\left[A_i\right]' = \left[A_i\right]+z_i, z_i\sim\mathcal{N}(0,0.05^2)$. This deviation was applied on every chemical species representing layer input-outputs, namely $X^i$s, following our notation. At fixed time points, the gaussian random variables $z_i$ are resampled as described in Figure OOO. 

\subsection{Sensitivity to noise level and the running time}
We here examined how the performance of the CRNs implementing a NN changes with different noise levels and the running time for each iteration. We found that a NN with the leaky ReLU is more sensitive to the noise level and the running time than a NN with the smooth ReLU. For both CRNs the train error function oscillates due to noise (Figure \ref{fig:activation_noise}), and hence to clear comparison we use an average train error defined as 
\begin{align*}
    \frac{1}{10}\sum_{i=0}^9 E((K-i)t),
\end{align*}
rather than the train error at a single time point $KT$. Recall that $T$ is the running time of a single iteration and the $K$ is the total number of iterations. 

The same noise as in \eqref{eq:normal} was added to each the rate constants of the backward CRN $(\Sp^b,\C^b,\Re^b,\K^b)$ and species $X^\ell_i$'s of the forward CRN $(\Sp^f,\C^f,\Re^f,\K^f)$ for both the CRNs implementing NNs with the leaky ReLU and the smooth ReLU. We chose $\epsilon\in\{0, 0.02,0.04,\dots, 0.1, 0.12\}$. 
 The MNIST data set was employed for this test.
 For each $\epsilon$, the running time for each iteration is $50$. We measured the training error and the validation error by four independent experiments. To avoid potential blow-up especially for the case of a high noise level, we iterated the CRNs only $200$ times. 
The running time $T$ for each iteration is $50$. Figure \ref{fig:sensitivity} shows that the NN with the smooth ReLU is much more robust to different noise levels that the NN with the leaky ReLU.

We used non-linear regression data fitting to test sensitivity to noise levels. We used training data points $\{(x_i,\sin(x_i))\}_{i=1}^{13}$ for $x_i\in [0,\pi)$ to fit the sine curve using CRNs implementing NNs with leaky ReLU and smooth ReLU, respectively. The four independent training errors demonstrate the robustness of the CRN implementing the smooth ReLU (Figure \ref{fig:sensitivity}). We used 1300 iterations, with each iteration taking 50 units of running time. Due to the small number of data points, we did not separately use validation data points.

Next, we tested the robustness to the choice of the running time for each iteration. In the absence of noise, it is obvious that the accuracy increases as the running time increases, since the species of the CRNs have enough time to converge to their steady state. However, when noise exists, the accuracy may suffer from a long running time as the effect of noise can accumulate over time. Since users, especially wet lab experimentalists, might not know the optimal running time in advance, it is crucial to develop a CRN that is robust to running time variations. By varying the running times, we measured the training error and the validation error of the CRNs implementing NNs with the leaky ReLU and the smooth ReLU for MINST data sets. Figure \ref{fig:sensitivity} C shows that the CRN for the smooth ReLU is much more robust to the running time than the CRN for the leaky ReLU. We used $200$ iterations. As previously, we also used the sine curve regression to compare two CRNs, and the error turned out less sensitive for the case of the smooth ReLU (Figure \ref{fig:sensitivity} D). In this experiment, $1300$ iterations was used. The training error for the CRN calculating the smooth ReLU is more robust than that of the CRN calculating the leaky ReLU, while the training error is relatively high with the running time of $10$.

\begin{figure}[!h]
    \centering
    \includegraphics[width=1\linewidth]{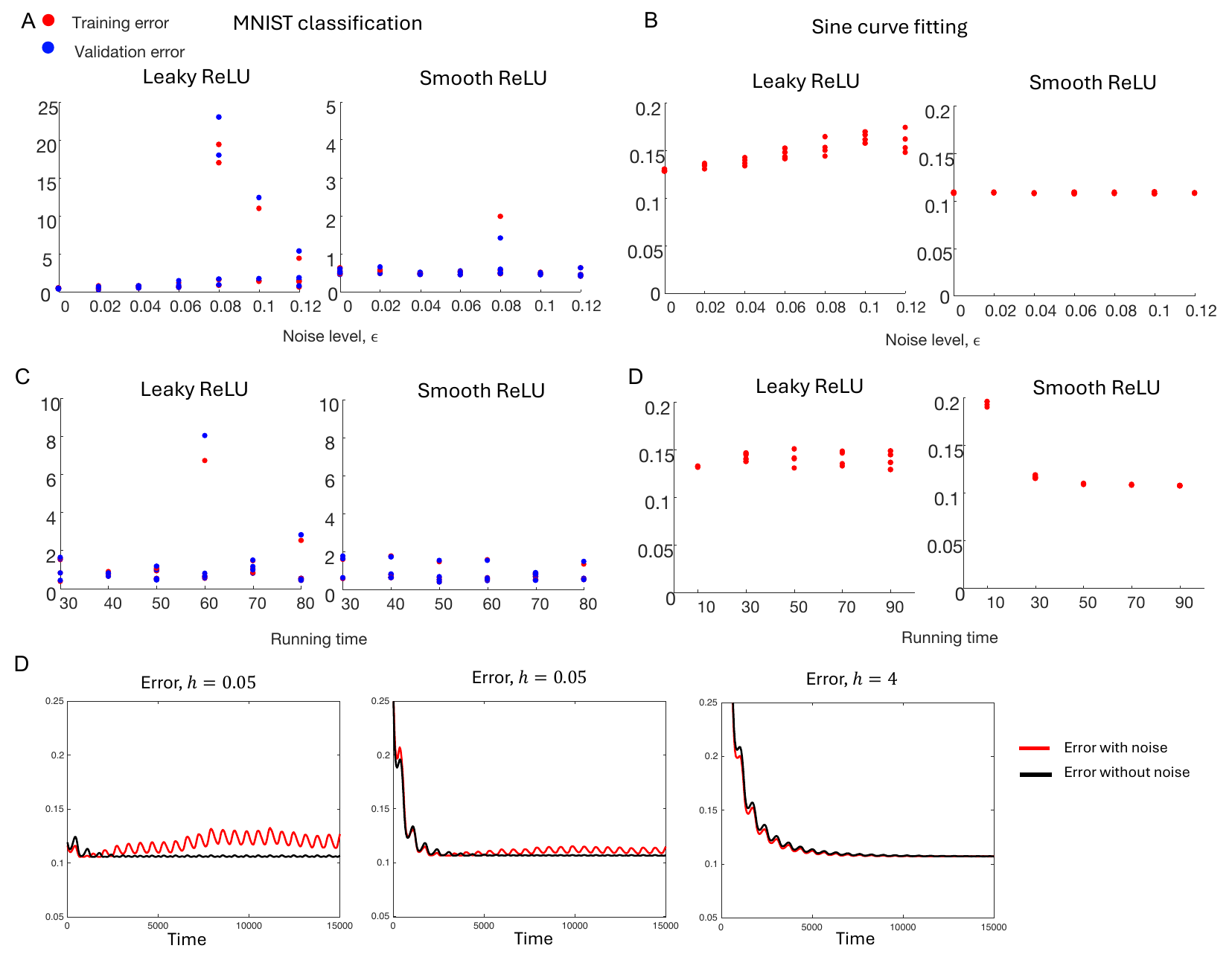}
    \caption{Sensitivity to the noise level (A and B) and the running time (C and D) of the Leaky ReLU and Smooth ReLU measured with the MNIST data set and the sine curve fitting. D. The time evolution of the error with the different values of $h$. See Table \ref{table:parameters} in Supplementary material for the parameters used for these simulations.}
    \label{fig:sensitivity}
\end{figure}

\subsection{Tuning the smoothing parameter to enhance robustness}
The value of $h$ in the smooth ReLU function affects the change in slope around the origin. When $h$ is large, the activation function and its derivative are less affected by noise. We tested the accuracy of the proposed CRN for implementing a neural network to classify MNIST datasets (see Figure \ref{fig:sensitivity} D). The training error fluctuates less for higher values of $h$. However, this comes at the cost of slower convergence, as the function's less dramatic changes near the origin can limit its activation power. Due to its reduced activation for larger h values, the smooth ReLU may struggle to adapt to significant input changes, even without noise. As demonstrated in Figure S, the smooth ReLU with $h=1$ outperformed $h=4$ on the MNIST data. Therefore, when dealing with potentially dramatic input variations, carefully adjusting $h$ can be needed for a better performance.

 \begin{rem}
    While adjusting the leaky ReLU's slopes can theoretically enhance smoothness, this approach is overly drastic. Altering $\alpha$ and $\beta$ in \eqref{eq:leaky Relu} uniformly impacts the behavior of the function on all large $|x|$, potentially hindering activation and performance (Figure \ref{figs3}). In contrast, the smooth asymptotic slopes of ReLU remain consistent for large $|x|$, ensuring a more balanced and effective activation (i.e. $\lim_{x\to\infty} \frac{\sigma(x)}{x}=1$ and $\lim_{x\to-\infty} \frac{\sigma(x)}{x}=0$).
\end{rem}

\section{Discussion}
%The potential utility for engineering \emph{in vivo} and \emph{in vitro} circuits for medicine make chemical reaction computers more impactful. 
We have seen that chemical reactions can be used to mimic artificial NNs, but we need better methods for dealing with the external and internal noise of these reactions when using them for neural network computing and training. In our research, we proposed that creating smooth activation functions using these chemical reactions could improve the reliability of neural network computations. 

  We have developed bimolecular CRNs that can compute the gradients of loss functions when the activation functions are smooth functions of the nodes. Instead of discrete iterations, the parameters are continuously updated, allowing for one-pot computing with the entire CRN. We have proved the exponential convergence of the proposed CRN.
Simulations have demonstrated that our CRN is more robust to noise in the rate parameters and inputs compared to another CRN that implements neural networks with non-smooth activation functions.

%Beyond \emph{in silico}, we are dreaming synthesis of the proposed CRN \emph{in vitro} and \emph{in vivo}. For implementing deep neural networks, one possible future work is to simplify the CRN structure. Currently the demand on the number of reactions and species to calculate the gradient of the smooth activation functions is higher than the previously suggested CRNs for either ReLU activation functions with binary weights or step activation functions.

Interesting ideas about rate-independent chemical reaction networks for a feed-forward NN were introduced suggesting that fine-tuning of the reaction rates is not necessary \cite{senum2011rate, vasic2022programming}. This means that the fluctuation of the reaction rates does not alter the chemical computations for feed-forward NNs leading to a ideal solution for noise-robust biochemical computations. However, this type of biochemical computing can only realize semilinear functions \cite{chen2014rate, chen2023rate, chen2014deterministic}. 
Hence noise-robustness is unclear for the rate-independent chemical reaction networks if the backpropagation, which often requires non-linear functions, is also implemented by chemical reactions. 
Nonetheless, there is potential for improvement in robustness by partially employing such rate-independent computation. The addition, subtraction, minimum, and maximum functions are realized by rate-independent steady states of some CRNs \cite{chen2014rate}. We can replace some sub-CRNs of our CRN, whose steady states are rate-dependent, by the CRNs suggested by \cite{chen2014rate} to enhance robustness to noise within the rate parameters. 

The convergence rate is another aspect that needs improvement. Currently, while the convergence of the species in the forward and backward CRNs is exponentially fast in time, the convergence speed can be delayed by many system components such as the reaction rates and the initial amounts of the species. In particular, the initial amount of the species can be controlled, but it is uncertain which initial conditions can minimize the convergence time. A recently suggested study \cite{anderson2024chemical} proposed new CRN structures for initial amount-independent convergence. This idea has the potential to improve the robustness of the convergence speed of the CRNs used for computing and training neural networks.

%Input-independent convergence computing.
% Binary weights or discrete activation functions are  favorable settings for feasibility of engineering of  chemical computers. 
% For these settings, rate-independent discrete switching for the counts of species will be critical for training the forward network. Usage of stochasticity of reactions that inherently induce discreteness to species can be potential solutions for this goal.

The future aim will be to move beyond computer simulations and actually create the proposed chemical reaction network in a laboratory setting and within living organisms. To use CRNs to carry out deep neural networks effectively, we may need to simplify the structure of the chemical reaction network. Currently, the number of reactions and species required to calculate the gradient of the smooth activation functions is not small. One potential solution is to use the matrix multiplication with DNA suggested in \cite{oliver1997matrix}. In our CRN, the matrix multiplication for the weighted sum $wx+b$ was computed directly using reactions for the arithmetic operations. However a more refined approach for matrix multiplications would be used to reduce the number of reactions.

\appendix

\section{Speed of convergence}\label{app:conv speed}

In this section, we show that the dynamical system associated with the CRN implementing the backward network admits exponential convergence to its steady state provided that the concentration of the species $W^{\ell,\pm}_{ij}$ and $B^{\ell,\pm}_i$ in the update network are fixed at constants. %As we set slow the evolution of the dynamic species in the update network, this exponential convergence means fast convergence of the

The forward and backward CRNs are made up of sub-CRNs, each of which is responsible for performing an arithmetic operation. Importantly, there are no feedback loops within the network (See Figure \ref{fig: structure} and Remark \ref{rem:feedback free}). Therefore, the convergence of the species in the backward CRN does not impact the convergence of the species in the forward CRN. This is crucial for proving  the convergence of the backward CRN, as previously shown in the convergence of the species in the forward CRN  \cite{anderson2021reaction}.

  We note that each species in the backward CRN $(\Sp^b,\C^b,\Re^b,\K^b)$, say $R\in \Sp^b$, is governed by a  differential equation with some smooth functions $\phi:\mathbb R^{m}_+\to \mathbb R$ and  $\psi:\mathbb R_+\to \mathbb R$ such that
  \begin{align}\label{eq:r}
      \frac{d}{dt}r(t)=\phi(s_1(t),\dots,s_{m-1}(t))-\psi(s_m(t))r(t),
  \end{align}
  where $m$ is some positive index, and $s_i$'s are the concentration of species whose exponential convergence to a positive steady state is already guaranteed. Furthermore it holds that 
  \begin{align*}
      \lim_{t\to \infty}\phi(s_1(t),\dots,s_{m-1}(t))>0 \quad \lim_{t\to \infty}\psi(s_m(t))>0.
  \end{align*}
For example, from the reactions in Table \ref{table:1}, we can see that the concentration $dy(t)$ of $DY$ at time $t$ follows
\begin{align}\label{eq:dy}
    \frac{d}{dt}dy(t)=(y(t))^2-a^2(t)dy(t),
\end{align}
  where $y(t)$ is the concentration of the species corresponding to the node in the output layer of the forward network, and $a^2(t)$ is the concentration of the auxiliary species $A^2$ in Table \ref{table:1}. Both are feeding forward $DY$ as catalysts. The exponential convergence of $y(t)$ is shown in \cite{anderson2021reaction}, and the exponential convergence of $a^2(t)$ ($2$ is a super index, not a power) can be  shown by its differential equation
  \begin{align*}
      \frac{d}{dt}a^2(t)=(y(t))^2+h-a^2(t), 
  \end{align*}
  whose solution can be written as
$a^2(t)=e^{-t}\int_0^t e^s \left ( (y(s))^2+h\right ) ds+a^2(0)e^{-t}$. Similarly
using the exponential convergence of $y(t)$ and $a^2(t)$, we can obtain exponential convergence to a positive value for $dy(t)$ from \eqref{eq:dy}. We generalize this in the following lemma.

\begin{lem}\label{lem:expo}
    Let $r:[0,\infty)\to \mathbb R_+$ be a solution to \eqref{eq:r}. If for each $i=1,\dots,m$, there exist positive constants $s_{i,\infty}$ and $\beta>0$ such that $|s_i(t)-s_{i,\infty}|\le e^{-\beta t}$ for all $t$, where $\lim_{t \to \infty}s_i(t):=s_{i,\infty}$. Then there exist $\alpha>0$, $c>0$ and $r_\infty>0$ such that $|r(t)-r_\infty|\le ce^{-\alpha t}$ for all $t$.
\end{lem}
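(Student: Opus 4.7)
The plan is to linearize the ODE about its candidate steady state and then apply a variation-of-constants bound. Define
\[
\phi_\infty:=\phi(s_{1,\infty},\dots,s_{m-1,\infty}),\qquad \psi_\infty:=\psi(s_{m,\infty})>0,\qquad r_\infty:=\phi_\infty/\psi_\infty,
\]
where positivity of $\phi_\infty$, and hence of $r_\infty$, is part of the standing hypothesis on the sub-CRNs noted just before the lemma. Setting $e(t):=r(t)-r_\infty$, the equation for $r$ rearranges to
\[
\dot e(t) = -\psi_\infty\, e(t) + g(t),\qquad g(t):=\bigl[\phi(\mathbf s(t))-\phi_\infty\bigr] - \bigl[\psi(s_m(t))-\psi_\infty\bigr]\,r(t),
\]
where $\mathbf s(t):=(s_1(t),\dots,s_{m-1}(t))$.

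Since $|s_i(t)-s_{i,\infty}|\le e^{-\beta t}$, each $s_i(t)$ is confined to a compact neighborhood of its limit, on which smoothness of $\phi$ and $\psi$ yields a Lipschitz estimate
\[
|\phi(\mathbf s(t))-\phi_\infty|\le K e^{-\beta t},\qquad |\psi(s_m(t))-\psi_\infty|\le K e^{-\beta t}
\]
for some $K>0$. To convert this into a bound on $g$, I still need $r$ bounded. For $t$ sufficiently large, $\psi(s_m(t))\ge \psi_\infty/2$ and $\phi(\mathbf s(t))\le M_\phi$ for a constant $M_\phi$; comparison of $\dot r = \phi(\mathbf s(t))-\psi(s_m(t))r$ with the scalar ODE $\dot u = M_\phi-(\psi_\infty/2)u$ then gives a uniform upper bound on $r$, while $r\ge 0$ is inherited from the mass-action structure. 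Together these imply $|g(t)|\le C e^{-\beta t}$ for some $C>0$.

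Variation of constants now gives
\[
e(t) = e^{-\psi_\infty t}\,e(0) + \int_0^t e^{-\psi_\infty(t-u)}\, g(u)\,du,
\]
and evaluating the integral produces $|e(t)|\le c_1 e^{-\psi_\infty t} + c_2 e^{-\beta t}$ when $\beta\ne \psi_\infty$, or an additional linear factor $t$ in the resonant case $\beta=\psi_\infty$. Choosing any exponent $\alpha\in(0,\min(\beta,\psi_\infty))$ absorbs that polynomial and yields the desired estimate $|r(t)-r_\infty|\le c\, e^{-\alpha t}$ for a suitable constant $c$.

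The principal obstacle is the a priori upper bound on $r(t)$: without it, the term $[\psi(s_m(t))-\psi_\infty]\,r(t)$ in $g$ is not obviously small, and the whole scheme collapses. The comparison argument above handles this, but it is the one nontrivial input to the proof. The resonance subcase $\beta=\psi_\infty$ is a purely cosmetic issue, resolved by slightly decreasing $\alpha$.
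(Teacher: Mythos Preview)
Your proof is correct and rests on the same linearization around $r_\infty=\phi_\infty/\psi_\infty$ that the paper uses. The paper packages the argument via the quadratic Lyapunov function $V(t)=\tfrac12(\phi_\infty-\psi_\infty r(t))^2$ and derives a differential inequality $V'(t)\le -2V(t)+Ce^{-\gamma t}$ before invoking Gronwall, whereas you go straight to the Duhamel/variation-of-constants formula for the error $e(t)=r(t)-r_\infty$. Both routes require exactly the same two inputs: a Lipschitz bound for $\phi,\psi$ on a compact neighborhood of the limits (from smoothness), and an a priori upper bound on $r$; the paper simply asserts the latter ``by \eqref{eq:r}'' while you supply it explicitly via a comparison ODE, which is a minor improvement in rigor. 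Your treatment of the resonant case $\beta=\psi_\infty$ is a detail the paper's squared formulation happens to sidestep, but the difference is purely cosmetic.
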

\begin{proof}
For simplicity we denote $\phi(t)=\phi(s_1(t),\dots,s_{m-1}(t))$ and $\psi(t)=\psi(s_m(t))$. 
    We also denote $\phi_\infty=\dlim_{t\to \infty}\phi(s_1(t),\dots,s_{m-1}(t))$ and $\psi_\infty=\dlim_{t\to \infty}\psi(s_m(t))$. These limits exist as $\phi$ and $\psi$ are continuous functions. Letting $V(t)=\frac{1}{2}(\phi_\infty -\psi_\infty r(t))^2$, we have
    \begin{align*}
        \frac{d}{dt}V(t)&=(\phi_\infty -\psi_\infty r(t))(-\phi(s_1(t),\dots,s_{m-1}(t))+\psi(s_m(t))r(t))\\
     &= (\phi_\infty -\psi_\infty r(t))\left (-(\phi(t)-\phi_\infty)+(\psi(t)-\psi_\infty) r(t)-\phi_\infty +\psi_\infty r(t) \right )  \\
     &=-(\phi_\infty -\psi_\infty r(t))^2+
     \left (-(\phi(t)-\phi_\infty)+(\psi(t)-\psi_\infty) r(t) \right ).
    \end{align*}
    Note that due to the boundedness of $r(t)$ given by \eqref{eq:r} and the smoothness of $\phi$ and $\phi$, we have
    \begin{align*}
C|(\phi(t)-\phi_\infty)+(\psi(t)-\psi_\infty) r(t)|\le \sup_{i}|s_i(t)-s_{i,\infty}|+|s_m(t)-s_{m,\infty}|
    \end{align*}
    for some $C$. Here we applied the Mean value theorem to $\phi$ and $\psi$. Therefore 
    there exists $\gamma>0
$ such that $(\psi(t)-\psi_\infty) r(t)\le Ce^{-\gamma t}$ for all $t$ by the exponential convergence of $s_i$'s. Thus $V'(t)\le -2V(t)^2+e^{-\gamma t}$. By a version of the Gronwall's inequality \cite{anderson2021reaction}, we have
\begin{align*}
    V(t)\le V(0)e^{-2t}+e^{-2t}\int_0^t e^{2s}e^{-\gamma s}ds,
\end{align*}
and note that the right-hand side converges to $0$ exponentially fast. Therefore we can find $c>0$ and $\alpha>0$ such that $|r(t)-r_\infty|\le ce^{-\alpha t}$, where $r_\infty=\frac{\phi_\infty}{\psi_\infty}$. 
\end{proof}

Hence by Lemma \ref{lem:expo}, it is sufficient to show that the exponential convergence to positive steady states for the catalyst species corresponding to $s_i$'s in \eqref{eq:r}. The catalyst species $s_i$'s in the differential equations \eqref{eq:r} for the dynamic species $LY, DY, LW^{\ell-1}_{ij}$ and $LZ^{\ell-1}_i$ in $(\Sp^b,\C^b,\Re^b,\K^b)$ are given by either from the forward networks or the auxiliary species such as $A^\ell_i$'s in Table \ref{table:1}. The species from the forward network converge exponentially fast \cite{anderson2021reaction}, and as shown above so the auxiliary species do. Recursively, the $s_i$'s in \eqref{eq:r} when $r$ is either $DX^{\ell-1}_i$ or $LX^{\ell-1,\pm}_i$ are also the nodes $X^{\ell-1}$ and $A^{\ell-1}_i$, which converge exponentially as mentioned above. This leads that exponential convergence of $DX^{\ell-1}_i$'s and $LX^{\ell-1,\pm}_i$'s by Lemma \ref{lem:expo}.  Then we further recursively show that $LW^{\ell-2,\pm}_{ij}$'s and $LZ^{\ell-2,\pm}_{i}$ converge exponentially using Lemma \ref{lem:expo} since the catalyst species are either the species from the forward network or the species $DX^{\ell-1}_i$'s and $LX^{\ell-1,\pm}_i$'s, whose exponential convergence is already guaranteed. Repeating this, we can show exponential convergence of all the species in $(\Sp^b,\C^b,\Re^n,\K^b)$. We highlight again that the feedback-free structure is the key to such a recursive argument. \hfill $\square$

% As Figure \ref{fig: structure}, we can view the forward network and the backward network as a sequence of sub-CRNs. By the aforementioned conditions combined to Theorem 4.8 in \cite{anderson2021reaction} imply exponential convergence of the backward network recursively follows if the species in the forward network converge to positive steady states exponentially, which is already shown in \cite{anderson2021reaction}.

Anderson and Joshi showed more delicate construction to control the convergence rates that can be independent to the initial condition \cite{anderson2024chemical}. We did not employ the construction as the proposed CRNs are not necessarily bimolecular.

\section{Proof of Propositions  in Section \ref{sec:noise analysis}}\label{app:proof}
Throughout this section, the initial conditions for each species in $(\Sp,\C,\Re,\mathcal K^\epsilon)$ and $(\Sp,\C,\Re,\mathcal K^\epsilon)$ are identical. We also assume $\epsilon \in (0,1)$ sufficiently small.

\begin{lem}
Under the same setting as Proposition \ref{prop:deviation of forward}, we have
\begin{align}
    &\sup_{\ell,i}\sup_{t\in [0,T]}(x^\ell_i(t;\epsilon)+x^\ell(t)):=R<\infty \quad \text{and} \label{eq:upper bound of x}\\
     &\inf_{\ell,i}\inf_{t\in [0,T]}(x^\ell_i(t;\epsilon)+x^\ell(t)):=r>0 \label{eq:lower bound of x}
\end{align}
\end{lem}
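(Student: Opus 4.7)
My plan is to proceed by induction on the layer index $\ell$, exploiting the fact that the forward dynamics \eqref{eq:forward eq} is feedback-free, so that $x^\ell_i$ depends only on previous-layer concentrations. The first reduction is a noise-handling step: since each $\xi_j(t)$ is piecewise constant with at most $\lceil T/t_0\rceil$ pieces on $[0,T]$, the quantity $M(\omega):=\sup_{j}\sup_{t\in[0,T]}|\xi_j(t)|$ is almost surely finite, and restricting to $\epsilon<1/(2M)$ gives the two-sided bound $1+\epsilon\xi_j(t)\in[\tfrac12,\tfrac32]$ together with uniform control on all perturbed weights $(1+\epsilon\xi^{\ell,\pm}_{ij})w^{\ell,\pm}_{ij}$ and biases $(1+\epsilon\xi^{\ell,\pm}_i)b^{\ell,\pm}_i$. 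The second reduction is that both $x^\ell_i(t;\epsilon)$ and $x^\ell_i(t)$ are nonnegative, so bounds on the sum follow from separate two-sided bounds on each, and the noise-free version is just the $\epsilon=0$ case of the perturbed argument.

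For the base case $\ell=1$, the species $X^1_i$ is a catalyst fixed to the input $\hat x_i>0$ and is constant in $t$. For the inductive step, assuming $0<r_{\ell-1}\le x^{\ell-1}_j(t;\epsilon)\le R_{\ell-1}<\infty$, the affine structure of $\rho^{\ell-1}_{j,\epsilon}$ and the above control on its coefficients yield a finite constant $C_{\ell-1}$ bounding $\bigl|\sum_j\rho^{\ell-1}_{j,\epsilon}(x^{\ell-1}_j(t;\epsilon))\bigr|$ uniformly in $t\in[0,T]$. Substituting into \eqref{eq:forward eq} gives the differential inequality
\[
h-C_{\ell-1}\,x^\ell_i-\tfrac32(x^\ell_i)^2 \;\le\; \frac{d}{dt}x^\ell_i \;\le\; h+C_{\ell-1}\,x^\ell_i-\tfrac12(x^\ell_i)^2.
\]
Standard scalar comparison with the logistic-type ODEs $\dot z=h\pm C_{\ell-1}z-\alpha z^2$ (with $\alpha=\tfrac12$ and $\tfrac32$, respectively), each of which has a unique positive globally attracting equilibrium $\bar M_\ell=C_{\ell-1}+\sqrt{C_{\ell-1}^2+2h}$ and $\underline M_\ell=\tfrac13\bigl(-C_{\ell-1}+\sqrt{C_{\ell-1}^2+6h}\bigr)>0$, then yields
\[
\min\!\bigl(x^\ell_i(0),\underline M_\ell\bigr)\;\le\; x^\ell_i(t;\epsilon) \;\le\;\max\!\bigl(x^\ell_i(0),\bar M_\ell\bigr),\qquad t\in[0,T].
\]
Note that $\underline M_\ell>0$ uses crucially the smoothing parameter $h>0$. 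Iterating up to $\ell=L$ and setting $r=\min_{\ell,i}\min(x^\ell_i(0),\underline M_\ell)$, $R=\max_{\ell,i}\max(x^\ell_i(0),\bar M_\ell)$ closes the induction and gives the claimed bounds on the sum.

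The main obstacle is keeping the constants honest in the face of the unbounded Gaussian noise: since $|\xi_j|$ has no deterministic upper bound, every quantity above ($C_{\ell-1}$, $\bar M_\ell$, $\underline M_\ell$, hence $R$ and $r$) is genuinely a random variable, and the smallness condition on $\epsilon$ is $\omega$-measurable. This is consistent with the lemma, which asserts $R<\infty$ and $r>0$ without requiring deterministic bounds, and it matches the form of the next proposition, in which the deviation is bounded by a finite random variable $\Theta$. A secondary subtlety is the need to assume strictly positive initial concentrations for all hidden- and output-layer species; without this, the bound $r>0$ fails at $t=0$ and one must instead take the infimum over $[\delta,T]$ for a short warm-up $\delta>0$ during which the $+h$ term drives the concentrations away from zero.
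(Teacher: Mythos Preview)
Your proof is correct and follows essentially the same route as the paper: induct on the layer index using the feedback-free structure of \eqref{eq:forward eq}, with the quadratic term $-(x^\ell_i)^2$ supplying the upper bound via a Gronwall/comparison argument and the additive $h>0$ supplying the strictly positive lower bound. Your version is in fact more carefully quantified than the paper's three-line sketch---you make explicit the $\omega$-measurable smallness condition on $\epsilon$ and correctly record that $R,r$ are random variables---but the underlying mechanism is identical.
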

\begin{proof}
Since the inputs $\hat x_i$'s and the parameters $w^{\ell,\pm}_{ij}$ and $b^{\ell,\pm}_i$ are fixed, from \eqref{eq:forward eq}
\begin{align*}
    \frac{d}{dt}x^1_i(t;\epsilon)\le c_0-2(x^1_i(t;\epsilon))^2\le c_1-c_2 x^1_i(t;\epsilon)
\end{align*}
for some positive constants $c_i$'s. Thus by Gronwall's inequality $\sup_{t\in [0,T]}\sup_ix^1_i(t;\epsilon)<\infty$. Then we can inductively have  $\sup_{t\in [0,T]}\sup_ix^2_i(t;\epsilon)<\infty$. When $\epsilon=0$, we can get the same result, and hence eventually we have \eqref{eq:upper bound of x}. Furthermore, due to the term $h$
in \eqref{eq:forward eq} and \eqref{eq:forward eq2}, both $x^\ell_i(t;\epsilon)$ and $x^\ell_i(t)$ do not approach to zero. Hence \eqref{eq:lower bound of x} follows. 
Notably $R=R(\max_{\ell,i,j}w^{\ell,\pm}_{ij},\max_{\ell,i}b^{\ell,\pm}_i)$.    
\end{proof}

\textbf{Proof of Proposition \ref{prop:deviation of forward}}
To prove the result inductively, assume that for a fixed $\ell$, there exists $\Theta^{\ell-1}$ such that for any $i$, we have that $\sup_{t\in [0,T]}|x^{\ell-1}_i(t;\epsilon)-x^{\ell-1}_i(t)|\le \epsilon \Theta^{\ell-1}$, where
\begin{align}\label{eq:theta dependence}
    \Theta^{\ell-1}&=\Theta^{\ell-1}(L, \max_{\ell}M^m,h,\max_{\ell,i,j}w^{m,\pm}_{ij},\max_{\ell,i}b^{m,\pm}_i,\sup_{\ell-1,i,j}\sup_{t\in ]0,T]}(|\xi^{m,\pm}_{ij}|+|\xi^{m,\pm}_{i}|+|\xi^m_{i}|)).
\end{align}
For small enough $\epsilon$, by \eqref{eq:forward eq} and \eqref{eq:forward eq2} we have
\begin{align}
    &\frac{d}{dt}(x^{\ell}_i(t;\epsilon)-x^{\ell}_i(t))\\
    &\le \epsilon\left(\sum_j(w^{\ell-1,+}_{ij}-w^{\ell-1,-}_{ij})(x^{\ell-1}_i(t;\epsilon)-x^{\ell-1}_i(t)) +C_1^{\ell-1}  \right )-((x^{\ell}_i(t;\epsilon))^2-(x^{\ell}_i(t))^2) \notag%\label{eq: difference of xe and x}
    \\
    &\le \epsilon \left (C_2^{\ell-1} \Theta^{\ell-1}+C^{\ell-1}_2\right )-(x^{\ell}_i(t;\epsilon)+x^{\ell}_i(t))(x^{\ell}_i(t;\epsilon)-x^{\ell}_i(t)), \notag
\end{align}
where the constants $C^{\ell-1}_1$ and $C^{\ell-1}_2$ have the same dependence as \eqref{eq:theta dependence}.
Then 
\begin{align}
    \frac{d}{dt}(x^{\ell}(t;\epsilon)-x^{\ell}(t))^2&=(x^{\ell}(t;\epsilon)-x^{\ell}(t))\frac{d}{dt}(x^{\ell}(t;\epsilon)-x^{\ell}(t))\notag \\
    &\le \epsilon R \left (C_2^{\ell-1} \Theta^{\ell-1}+C^{\ell-1}_2\right ) - (x^{\ell}_i(t;\epsilon)+x^{\ell}_i(t))(x^{\ell}_i(t;\epsilon)-x^{\ell}_i(t))^2 \notag \\
    &\le \epsilon R \left (C_2^{\ell-1} \Theta^{\ell-1}+C^{\ell-1}_2\right ) - r(x^{\ell}_i(t;\epsilon)-x^{\ell}_i(t))^2. \label{eq:finish the deviation of x}
\end{align}
Thus by Gronwall's inequality, we get $|x^{\ell}_i(t;\epsilon)-x^{\ell}_i(t)|\le \epsilon \Theta^{\ell}$ where $\Theta^{\ell}$ is a positive constant satiying the same dependent as \eqref{eq:theta dependence}. Therefore, it suffices to show the hypothesis $|x^1_i(t;\epsilon)-x^1_i(t)|\le \epsilon \Theta^1$ to finish the proof inductively. When $\ell=1$, by \eqref{eq:forward eq} and \eqref{eq:forward eq2}, we have
\begin{align*}
    \frac{d}{dt}(x^{1}(t;\epsilon)-x^{\ell}(t))&\le C^0 - (x^{1}_i(t;\epsilon)+x^{1}_i(t))(x^{1}_i(t;\epsilon)-x^{1}_i(t))\\
\end{align*}
where $C^0$ is a positive constant solely depending on the inputs $w^{0,\pm}_{ij}, b^{0,\pm}_i$ and the noise terms $\xi^{0,\pm}_{ij}$, $\xi^{0,\pm}_{i}$ and $\xi^{0}$. As these are bounded (with probability 1), we can deduce 
$|x^1_i(t;\epsilon)-x^1_i(t)|\le \epsilon \Theta^1$ using the same argument as \eqref{eq:finish the deviation of x} with Gronwall's inequality. \hfill $\square$

The proof of Proposition \ref{prop:devevation of backward} is basically the same as the proof of Proposition \ref{prop:deviation of forward}. The gradient $\nabla_{w^\ell}\mathcal L$ and $\nabla_{b^\ell}\mathcal L$ are recursively obtained as \eqref{eq:derivative of L wrt x}--\eqref{eq:derivative of L wrt b}. Due to the recursive relations, the fluctuation of the gradients can be quantified with the fluctuation of the terms $\frac{\partial x^{\ell+1}_i}{\partial z^{\ell}_i}$ when $w^{\ell,\pm}_{ij}$ and $b^{\ell,\pm}_{i}$'s are fixed. Furthermore, the fluctuation of the derivatives is completely computed by the fluctuation of $x^\ell_i$ by \eqref{eq:derivative of relu}, which is shown in Proposition \ref{prop:deviation of forward}. By smoothness of $\sigma(z)$ with respect to $z$, the fluctuation of the $\frac{\partial x^{\ell+1}_i}{\partial z^{\ell}_i}$ is obtained, and the eventually desired result in Proposition \ref{prop:devevation of backward} follows. In this vein,  $\bar \Theta$ in Proposition \ref{prop:devevation of backward} depend on $\Theta$,  the number of layers and the fixed parameters of the weights and the biases.

\section{Implement neural networks with general activation functions}\label{app:other act}
Our implementation of the backpropagation algorithm is not limited to the smoothed ReLU function. It extends to accommodate various activation functions. 
In the context of a generic NN, the backpropagation process entails gradient calculation through arithmetic operations on the outputs. Thus, as long as the forward network is implemented as a CRN, the entire learning process of the neural network can be realized by constructing a backward network that employs the appropriate arithmetic operations based on the species containing value of output of neural network as suggested in this paper.

For instance, consider a NN employing the sigmoid function $\sigma(z)=\frac{1}{1+e^{-z}}$ as its activation function, implemented via a CRN. In this scenario, differentiation of the activation function yields that
\begin{align*}
    \sigma'(z)=\frac{e^{-z}}{(1+e^{-z})^2}=\frac{1+e^{-z}-1}{(1+e^{-z})^2}=\sigma(z)-\sigma_{\text{sig}}(z)^2.
\end{align*}
Then if $x^{\ell+1}_j=\sigma(z^\ell_j)$ for each $\ell$ and $j$, then
\begin{align*}
    \nabla_{z^\ell}x^{\ell+1}=\left (x^{\ell+1}_1-(x^{\ell+1}_1)^2,\dots,x^{\ell+1}_{C^\ell}-(x^{\ell+1}_{C^\ell})^2 \right )
\end{align*}
Thus the gradients are composed of arithmetic operations on the outputs, allowing us to construct another backward CRN using our method, utilizing the chemical species corresponding to the outputs.

In \cite{arredondo2022supervised} a CRN was built to calculate the hyperbolic tangent activation functions and to train a NN with the activation function.
Note that it is challenging to directly compute exponential functions that are needed to compute $\tanh(z)=\frac{e^{z}-e^{-z}}{e^{z}+e^{-z}}$. However, the authors in \cite{arredondo2022supervised} used the relation $\tanh'(z)=1-\tanh^2(z)$ to compute $\tanh(z)$ with clock proteins that are used to run sub-CRNs separately. Then for training, we can build a backward CRN similarly to calculate \begin{align*}
    \nabla_{z^\ell}x^{\ell+1}=\left (1-(x^{\ell+1}_1)^2,\dots,1-(x^{\ell+1}_{C^\ell})^2\right ).
\end{align*}

% \begin{align*}
%     \sigma'_{\text{tanh}}(z)=\frac{4}{e^{2z}+e^{-2z}+2}=\frac{4}{\frac{1+\sigma'_{\text{tanh}}(z)}{1-\sigma'_{\text{tanh}}(z)} + \frac{1-\sigma'_{\text{tanh}}(z)}{1+\sigma'_{\text{tanh}}(z)}+2} =(1-\sigma'_{\text{tanh}}(z))(1+\sigma'_{\text{tanh}}(z)),
% \end{align*}
% where we used $e^{2z}=\frac{1+\sigma'_{\text{tanh}}(z)}{1-\sigma'_{\text{tanh}}(z)}$.
% Then if $x^{\ell+1}_j=\sigma_{\text{tanh}}(z^\ell_j)$ for each $\ell$ and $j$, then

\section{Details of the data and simulations}\label{app:data}
Here we give more details on the data sets we used in Section \ref{sec:Xor, Iris}.

\subsection{XOR}
Our objective is to construct an XOR classifier using a regression approach applied to the data set, which comprises tuples of the form $\{(\hat x,\hat y) : ((0,0),0),((1,0),1),((0,1),1),((1,1),0)\}$. Here $\hat x\in \{0,1\}^2$ is the input and $\hat y\in \{0,1\}$ is the output. The neural network takes $\hat x$ as input and produces $y$ as the output. Subsequent to the training process, we determine whether the model's output surpasses the threshold of 0.5 to classify the XOR operation correctly.
To achieve this objective, we used the CRN that implements a two-layered NN with one hidden layer. The initial layer comprises two nodes, the subsequent hidden layer consists of 10 nodes, and the final output layer has a single node. That is, $M=3, C^1=2, C^2=10$ and $C^3=1$ in terms of the notations of Section \ref{sec:nn}.

\subsection{Iris}\label{app:iris}
The Iris dataset stands as one of the most renowned datasets in the realm of machine learning \cite{Iris}. This dataset comprises four distinctive features related to iris flowers belonging to three distinct species. In the context of our neural network architecture, the input is represented as a single vector in $\mathbb{R}^4$, and the output is a single vector in $\mathbb{R}^3$. We employ one-hot encoding to facilitate the comparison between our network's output and the ground truth data.
Our neural network architecture has the initial layer consisting of four nodes, followed by a single hidden layer comprising 10 nodes, and the output layer consisting of three nodes. 135 data points are used for training and a total 67500 iterations are used. 

We provided pair plots in Figure \ref{fig:application to data} B bottom. The off-diagonal plots provide the classification projected onto the plane defined by feature 1 on the $x$-axis and feature 2 on the $y$-axis. The diagonal plots show the distribution of the accumulated classification data over the same column, where the $x$-axis is feature 1 and the $y$-axis is the probability density (mass). See more details about the pair plots in \cite{emerson2013generalized}.

\subsection{MNIST}
The MNIST (Modified National Institute of Standards and Technology) dataset is one of the most well-known machine learning datasets, consisting of handwritten digit images \cite{MNIST}. %The MNIST dataset is a collection of grayscale images of handwritten digits ranging from 0 to 9. Each image is a 28x28 pixel square, making a total of 784 pixels per image. These images are accompanied by corresponding labels indicating the digit they represent. 
For our experimental purposes, we utilized a downscaled version of subset of the MNIST dataset, comprising exclusively of images representing the digits 0 and 1. These images were resized to an 8x8 pixel square format. 
Our neural network architecture is designed with a configuration similar to that discussed in Section \ref{sec: backward network} Specifically, the initial layer consists of 64 nodes, followed by a hidden layer comprising 4 nodes, an an output layer consisting of two nodes. 1000 data points are used for training and 2000 iterations were used. 

\subsection{Accuracy function}
The XOR, Iris, and MNIST data sets are for classification. For each output $\hat y$, let $f(\hat y)$ be the classification. 
For the output of the XOR data point $\hat y$,
\begin{align*}
    f(\hat y) =
    \begin{cases}
     0 \quad &\text{if $\hat y \in \{(0,0),(1,1)\}$,}\\
     1 &\text{otherwise.}
    \end{cases}
\end{align*}
% For the Iris data point $y$,
% \begin{align*}
%     f(\hat y) =
%     \begin{cases}
%      0 \quad &\text{if $\hat y$ is data of a versicolor,}\\
%      1 \quad &\text{if $\hat y$ is data of a setosa,}\\
%      2 &\text{if $\hat y$ is data of a virginica.}
%     \end{cases}
% \end{align*}
With given input and output data points $\{(\hat x_n, \hat y_n)\}_{n=1}^N$, the accuracy is given by \begin{align*}
    A(t)=\frac{1}{N}\sum_{n=1}^N \mathbbm 1_{|y_n(t)-f(\hat y_n)| \le 1/2},
\end{align*}
where $y_n(t)$ is the output of the forward network with the $n$ th input $\hat x_n$. Where the output of the NN is calculated with the weight and bias $w^\ell_{ij}(t)$ and $b^\ell_{i}(t)$, which are the time evolutions given by the backward and the update networks with the training data. 
As $\mathbbm 1_{y_n(t)-f(\hat y_n)}$ measures the accuracy of the classification of the CRN-implemented NN, $A(t)$ is the average accuracy over $N$ data points.

For both the Iris and the MNIST cases, the accuracy was evaluated by one-hot encoding. That is, we compare the argmax of $y$ and $\hat y$ whether it coincides or not. For example, for an input $\hat x$ that represents versicolor, $\hat y$ is $(1,0,0)$. If our neural network output gives $y=(0.9,0.05,0.1)$ from the input $\hat x$, then we will have $\mathbbm 1_{\{argmax\{y(t)\}=argmax\{\hat y\}\}} = \mathbbm 1_{\{1=1\}}=1$. Thus the accuracy is defined as
\begin{align*}
    A(t)=\frac{1}{N}\sum_{n=1}^N \mathbbm 1_{\{argmax\{y_n(t)\}=argmax\{\hat y_n\}\}},
\end{align*}
where $N$ is the number of the total input data set.

The evaluation of accuracy was performed at each $100$ th iteration.

\begin{table}[!h]
\centering
{\fontsize{5}{10}\selectfont  
    \resizebox{\textwidth}{!}{
\begin{tabular}{|c|c|c|c|c|c|c|c|}
\hline
Parameters  & Figure 6A & Figure 6B & Figure 6C & Figure 7CD & Figure 8AC & Figure 8 BC & Figure 8D\\
\hline
\makecell{Learning rate \\ $\eta \times T$} & $0.002\times 300$ & $0.00025\times 300$ & $0.0001\times 500$ & $0.0001\times 50$ & $0.0001\times 50$ & $0.0001\times 50$ &  $0.001\times 50$\\
\hline
\makecell{Number of \\ iterations} & $36000$ & $54000$ & $27560$ & $2000$ & $200$ & $1300$ & $650$\\
\hline
\makecell{Number of\\ training data sets} & $4$ & $135$ & $13780$ & $1000$ & $100$ & $13$ &  $13$ \\
\hline
\makecell{Number of \\ validation data sets} & $4$ & $15$ & $1000$ & $1000$ & $100$ & None & None \\
\hline
\end{tabular}}}
\caption{Parameters used for simulation.}\label{table:parameters}
\end{table}

\begin{figure}[!h]
    \centering
    \includegraphics[width=1\linewidth]{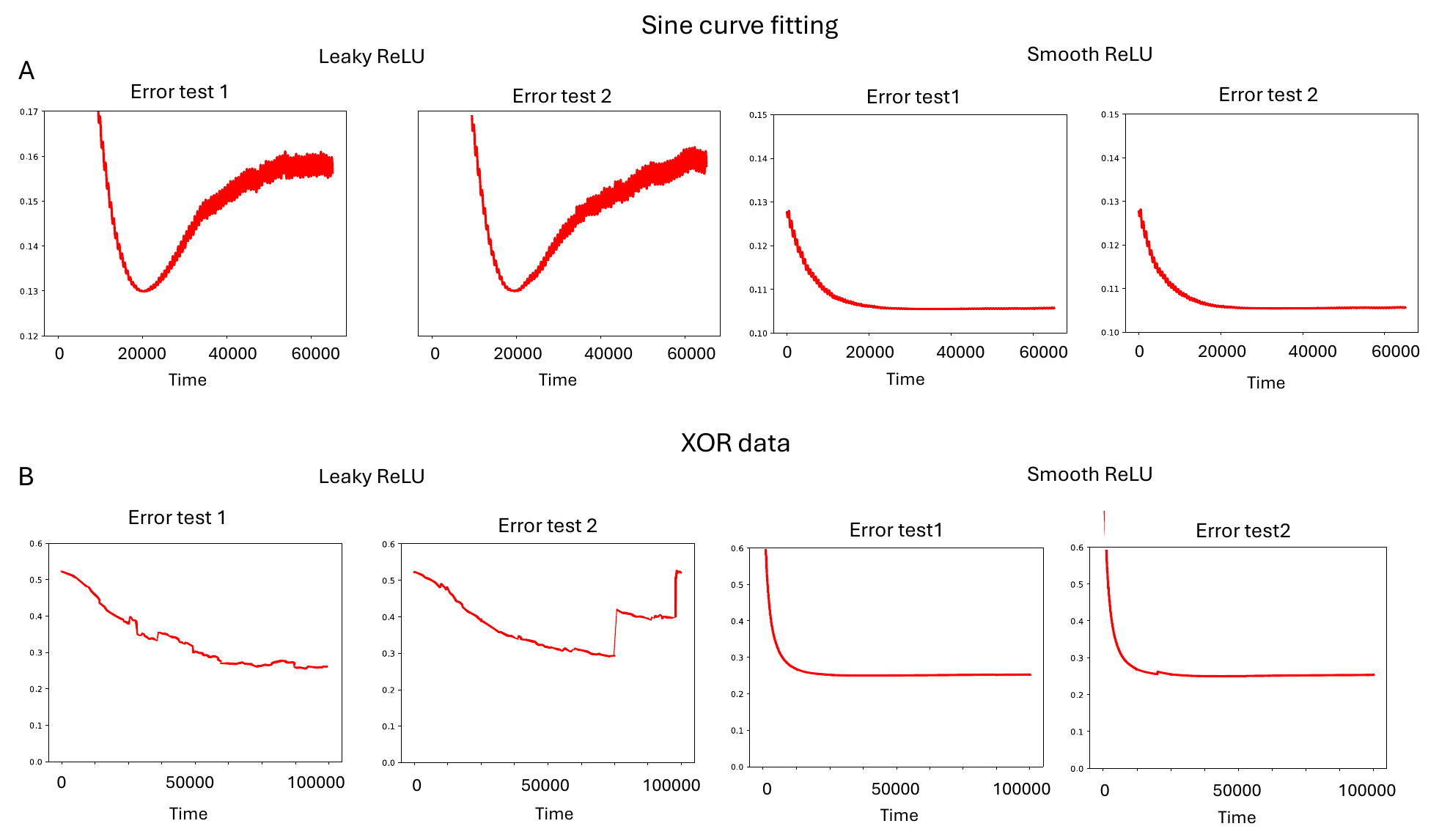}
    \caption{Time evolution of the error functions \eqref{eq:error} obtained by the CRNs implementing NNs with the leaky ReLU and the smooth ReLU. For the sine fitting (A), we used $h=4$ and the noise level $\epsilon=0.1$. For the XOR data set (B), we used $h=3$ and the noise level $\epsilon=0.3$.}
    \label{figs1}
\end{figure}

\begin{figure}[!h]
    \centering
    \includegraphics[width=1\linewidth]{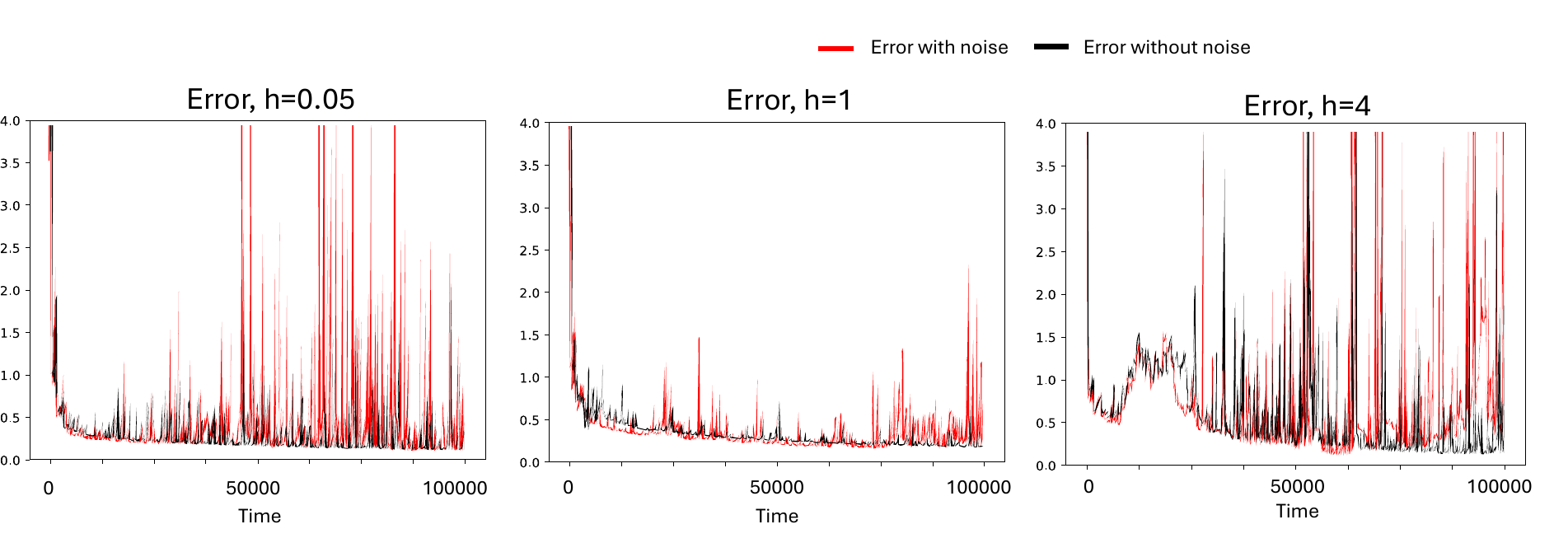}
    \caption{The time evolution of the error function \eqref{eq:error} obtained by the CRN implementing a NN with the smooth ReLU with various values of the smoothing parameter $h$. We used the MNIST data set with the noise level $\epsilon=0.1$.}
    \label{figs2}
\end{figure}

\begin{figure}[!h]
    \centering
    \includegraphics[width=0.8\linewidth]{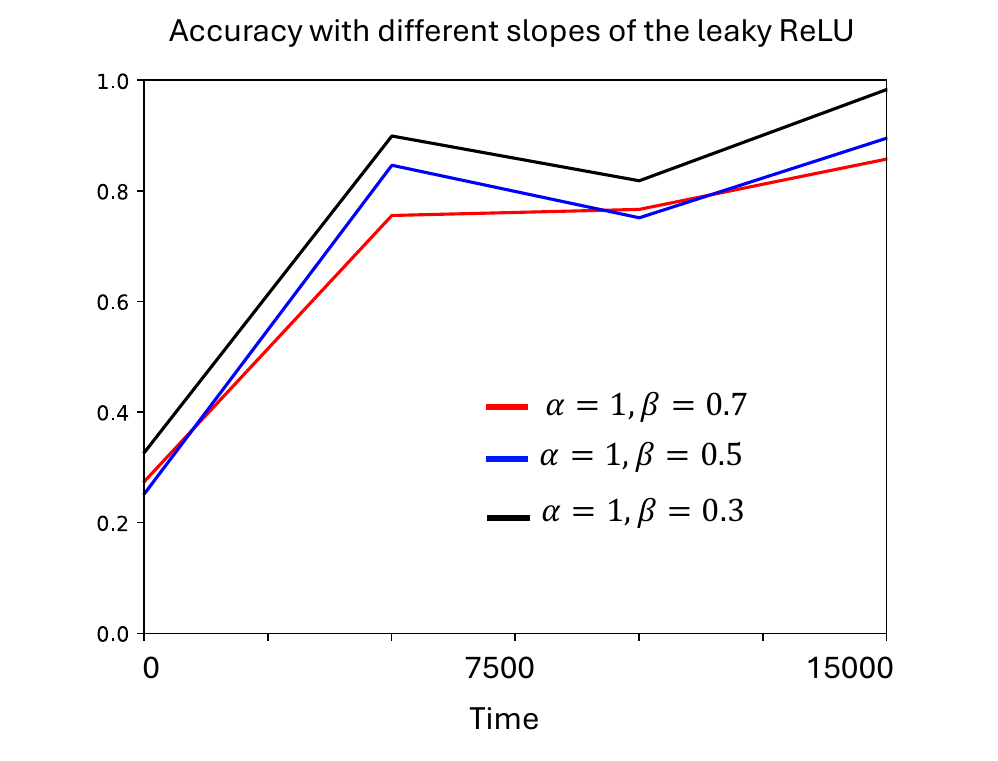}
    \caption{Performance of the CRN implementing a NN using the leaky ReLU with varying the slopes $\alpha$ and $\beta$.  }
    \label{figs3}
\end{figure}

\end{document}